\newtheorem{lemma}{Lemma}
\newtheorem{proposition}{Proposition}
\newtheorem{remark}{Remark}
\newcounter{TempEqCnt}
\DeclareMathOperator*{\argmin}{arg\,min}
\long\def\symbolfootnote[#1]#2{\begingroup
\def\thefootnote{\fnsymbol{footnote}}
\footnote[#1]{#2}\endgroup}
\begin{document}
\title{Distributed Coordinated Multicell Beamforming for Wireless Cellular Networks Powered by Renewables:
A Stochastic ADMM Approach}

\author{Shuyan Hu, Chongbin Xu,~\IEEEmembership{Member, IEEE}, Xin Wang,~\IEEEmembership{Senior Member, IEEE},\\
Yongwei Huang,~\IEEEmembership{Senior Member, IEEE}, and Shunqing Zhang,~\IEEEmembership{Senior Member, IEEE}
\thanks{This work was supported by the National Natural Science Foundation of China Grants No. 61671154, No. 61501123, and the Innovation Program of Shanghai Municipal Education Commission. Part of the work was presented at ICCS 2016 \cite{conf}.}
\thanks{S. Hu, C. Xu, and X. Wang are with the Shanghai Institute for Advanced Communication and Data Science, Key Laboratory for Information Science of Electromagnetic Waves (MoE), Department of Communication Science and Engineering, Fudan University,
Shanghai, China (emails:~\{syhu14,\,chbinxu,\,xwang11\}@fudan.edu.cn).

Y. Huang is with the School of Information Engineering, Guangdong University of Technology, Guangzhou, China
(email: ywhuang@gdut.edu.cn).

S. Zhang is with the Shanghai Institute for Advanced Communication and Data Science, Shanghai University, Shanghai, China
(email: shunqing@shu.edu.cn).}
}

\maketitle

\begin{abstract}
The integration of renewable energy sources (RES) has facilitated efficient and sustainable resource allocation for wireless communication systems.
In this paper, a novel framework is introduced to develop coordinated multicell beamforming (CMBF) design for wireless cellular networks powered by a smart microgrid, where the BSs are equipped with RES harvesting devices and can perform two-way (i.e., buying/selling) energy trading with the main grid.
To this end, new models are put forth to account for the stochastic RES harvesting, two-way energy trading,
and conditional value-at-risk (CVaR) based energy transaction cost.
Capitalizing on these models, we propose a distributed CMBF solution to minimize the grid-wide transaction cost
subject to user quality-of-service (QoS) constraints.
%with only local channel state information at each BS and limited information exchange among BSs.
Specifically, relying on state-of-the-art optimization tools, we show that the relevant task can be formulated as a convex problem that is well suited for development of a distributed solver.
To cope with stochastic availability of the RES, the stochastic alternating direction method of multipliers (ADMM) is then leveraged to develop a novel distributed CMBF scheme. It is established that the proposed scheme is guaranteed to yield the optimal CMBF solution, with only local channel state information available at each BS and limited information exchange among the BSs. Numerical results are provided to corroborate the merits of the proposed scheme.
\end{abstract}

%A framework is introduced to develop coordinated multicell beamforming (CMBF) design for wireless cellular networks powered by renewable energy sources (RES). To this end, novel models are put forth to account for RES harvesting, two-way energy trading, and conditional value-at-risk (CVaR) based energy transaction cost. Building on these models, we propose a distributed CMBF solution that minimizes the system-wide CVaR cost
%subject to user quality-of-service (QoS) constraints, with only local channel state information at each BS and limited information exchange among BSs.
%Specifically, relying on the pertinent optimization tools, we show that this task can be formulated as a convex problem that is well suited for development of a distributed solver. To cope with stochastic availability of RES, the stochastic alternating direction method of multipliers (ADMM) is then leveraged to obtain the optimal CMBF scheme in a distributed manner. Numerical results are provided to demonstrate the merits of the proposed scheme.

\begin{IEEEkeywords}
Coordinated multicell beamforming, conditional value-at-risk, renewable energy sources, stochastic ADMM.
\end{IEEEkeywords}

\section{Introduction}

With swift developments in computer and communication science, the upcoming fifth-generation (5G) era will witness many uprising mobile services, such as e-banking, e-health, e-learning, and social networking.
Proliferation of smart phones and tablets are driving explosive demands for wireless capacity.
It is predicted that data rate will increase a thousand-fold over the next decade \cite{oss14, ayy16}.
To accommodate such huge data traffic, traditional macro base stations (BSs) have evolved into pico and femto BSs,
where smaller BSs jointly serve overlapping areas to enhance quality-of-service (QoS) for end users in edge areas \cite{hwang13}.
In the resultant heterogeneous networks consisting of overlapping macro/micro/pico cells, coordinated multicell beamforming (CMBF) has emerged as a promising technique, where neighboring BSs jointly design their transmit beamformers to mitigate inter-BS interferences.

CMBF has received growing research interest in the past decade.
The optimal CMBF schemes were proposed to minimize the sum-power of BSs under user signal-to-interference-plus-noise-ratio (SINR) constraints or maximize the minimum user SINR under per BS power constraints in \cite{rash98, hou14, ngu11}.
%Relying on conic optimization tools, \cite{wie06} designed linear pre-coders for fixed multi-input-multi-output receivers by minimizing transmit power subject to SINR constraints or by maximizing worst-cast SINR subject to a power constraint.
Based on the uplink-downlink duality, \cite{dah10} developed coordinated beamforming for a multicell multi-antenna wireless system, to
minimize either the total weighted transmitted power or the maximum per-antenna power across the BSs subject to user SINR constraints.
Coordinated multicell beamformers were designed to balance user SINRs to multiple levels in \cite{yuwu15}.
All the existing CMBF schemes in \cite{rash98, hou14, ngu11, dah10, yuwu15} require a central controller
with global channel state information (CSI) and/or global user data sharing.

To avoid the large signalling and backhaul overheads resulting from such centralized schemes,
quite a few distributed CMBF solutions were proposed in \cite{huang11, pen11, tolli11, shen12, ng11},
where each BS devises its own beamformers using its local CSI with the help of limited information exchange among BSs.
Specifically, \cite{huang11} proposed a hierarchical iterative algorithm to jointly optimize downlink beamforming
and power allocation schemes in a distributed manner.  Leveraging the primal and/or dual decomposition techniques, decentralized CMBF schemes were pursued in
\cite{pen11} and \cite{tolli11}. Relying on the alternating direction method of multipliers (ADMM), \cite{shen12} proposed a distributed CMBF scheme which is robust to CSI errors. Game-theory based distributed CMBF scheme was also developed in \cite{ng11}.
All the works in \cite{rash98, hou14, ngu11, yuwu15, dah10, huang11, pen11, tolli11, shen12, ng11}
assumed that the BSs are supplied by persistent energy sources from the conventional power grid.

Insatiable demands for wireless capacity have led to tremendous energy consumption, carbon dioxide ($\text{CO}_2$) emission, and electricity bills for the service providers \cite{ismail11, ismail15}. Driven by the urgent need of energy-efficient and sustainable ``green communications,'' cellular network operators have started developing the ``green'' BSs that can be jointly supplied by the main electric grid as well as the harvested clean and renewable (e.g., solar/wind/thermal) energy \cite{Li15}. It is expected that renewable powered BSs will be widely used in the future cellular systems. In addition, the current power grid infrastructure is also on the verge of migrating from the aging grid to a ``smart'' one. Integration of renewable energy resources (RES) and smart-grid technologies into system designs clearly holds the key to fully exploiting the potential of green communications.
To this end, our recent works \cite{Wan15, Wan16, xiao16, xiao17} leveraged smart-grid capabilities to explore efficient coordinated beamforming designs for coordinated multipoint (CoMP) systems with RES. The centralized CoMP solutions in \cite{Wan15, Wan16, xiao16, xiao17} need to be determined by a central controller that can collect the global CSI, global energy information, as well as all users' data.

In this paper, we address distributed CMBF design for a smart-grid powered coordinated multicell downlink system.
Different from \cite{rash98, hou14, ngu11, yuwu15, dah10, huang11, pen11, tolli11, shen12, ng11}, we assume that the BSs have local RES and can carry out two-way energy trading with the main grid. To account for the unpredictable and nondispatchable nature of RES, a \emph{conditional value-at-risk} (CVaR) cost function is introduced to pursue both efficient and robust resource allocation actions \cite{Yu15}.
%The introduction of CVaR aims at constraining the high risk of very large electricity bill when the RES amount is rather low,
%and maintaining a relative stability of the power system.
Relying on the semidefinite relaxation (SDR) technique, we formulate a convex CMBF problem to minimize the system-wide CVaR-based transaction cost
subject to user quality-of-service (QoS) constraints.
Suppose that the distribution function of the (random) RES amount is unknown; yet, a large set of independent and identically distributed (i.i.d.) observations/realizations are available (e.g., collected from historical measurement and stored in the big database) per BS.
Leveraging the {\em stochastic} ADMM, we develop a systematic approach to obtain the desired CMBF solution in a distributed fashion. It is shown that the proposed algorithm is guaranteed to find the optimal CMBF scheme, with only local CSI at each BS and limited information exchange among BSs.

To the best of our knowledge, our work is the first to leverage the {\em stochastic} ADMM approach in
distributed CMBF design for renewable powered wireless cellular networks.
The main contribution of this paper is three-fold:
i) a CVaR-based transaction cost is introduced in CMBF design to minimize the average energy cost while
effectively avoiding the risk of a very large electricity bill for the RES powered multicell system;
ii) an SDR-based decomposable convex problem reformulation is proposed to facilitate the development of distributed CMBF solution;
and iii) a novel stochastic ADMM approach is developed to find the optimal CMBF scheme in a distributed manner for practical stochastic (uncertain) RES powered cellular network environment.

The rest of the paper is organized as follows. Section \ref{sec:comp} describes the system models, while Section \ref{sec:cvxform} formulates our CVaR-based CMBF problem. The proposed distributed CMBF solution is developed in Section \ref{sec:distcmbf}.
Numerical results are provided in Section \ref{sec:sim}. The paper is concluded in Section \ref{sec:con}.

\section{System Models}\label{sec:comp}

Consider a coordinated multicell downlink system consisting a set of ${\cal I}:= \{1, \ldots, I\}$ different (e.g., macro/pico/femto) cells. For simplicity, assume that each cell has only one BS equipped with $N_t\geq 1$ transmit antennas, providing service to a set of ${\cal K}:=\{1,\ldots, K\}$ single-antenna
user equipments (UEs); see Fig. \ref{model}. The $I$ BSs operate over a common frequency band, and each communicates with its associated UEs using transmit beamforming.
%The BSs are aware of the downlink channel state information (CSI) by using effective channel estimation and feedback schemes.
Note that our approach can be readily extended to the scenarios where one BS serves a different number of UEs and/or one UE is served by multiple BSs.

\begin{figure}[th]
\centering
\includegraphics[width=0.47\textwidth]{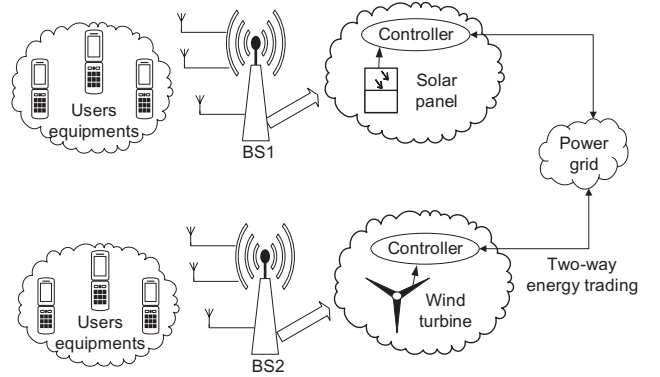}
\caption{A smart-grid powered coordinated multicell downlink system.}
\label{model}
\end{figure}

Assume that the multicell downlink system is powered by a smart microgrid,
where each BS is equipped with one or more local energy harvesting devices (solar panels and/or wind turbines).
Upon energy surplus or deficit in the microgrid, the BSs can perform two-way (i.e., buying/selling) energy trading with the main grid.
A controller is installed at each BS, to collect information of electricity prices
and to coordinate energy transaction activities with main grid.

\subsection{Downlink Transmission Model}\label{sec:downlink}

Let $\text{BS}_i$ denote the $i$th BS, and $\text{UE}_{ik}$ denote the $k$th user served by $\text{BS}_i$, for $i \in {\cal I}$ and $k \in {\cal K}$.
%For convenience, we assume that all the channel state information (CSI) $\{\mathbf{h}_{jik}\}_{j \in {\cal I}}$ is precisely known to the BSs.
With $s_{ik}(t)$ denoting the information-bearing symbol for $\text{UE}_{ik}$ and $\mathbf{w}_{ik} \in \mathbb{C}^{N_t \times 1}$ the associated beamforming vector, the transmit signal at $\text{BS}_i$ is given by
\[
    \mathbf{x}_{i}(t) =\sum_{k=1}^K {\mathbf{w}_{ik} s_{ik}(t)}, \quad \forall i
\]

Let $\mathbf{h}_{jik} \in \mathbb{C}^{N_t \times 1}$ denote the vector channel from $\text{BS}_j$ to $\text{UE}_{ik}$,
$\forall j, i \in \cal I$, $\forall k \in \cal K$.
The received signal at $\text{UE}_{ik}$ is then
\begin{align}
y_{ik}(t) &= \sum_{j=1}^I{\mathbf{h}_{jik}^H \mathbf{x}_j(t)} + z_{ik}(t) \notag \\
&= \mathbf{h}_{iik}^H \mathbf{w}_{ik} s_{ik}(t) + \sum_{l \neq k}^K {\mathbf{h}_{iik}^H \mathbf{w}_{il} s_{il}(t)} \notag\\
&\quad\ + \sum_{j \neq i}^I {\sum_{l=1}^K {\mathbf{h}_{jik}^H \mathbf{w}_{jl} s_{jl}(t)}} + z_{ik}(t) \label{eq.yk}
\end{align}
where $z_{ik}(t)$ is a circularly symmetric complex Gaussian noise with zero mean and variance $\sigma_{ik}^2$.
%which may also include the downlink interference from noncoordinated BSs out of the cluster.
%It is assumed that $z_{ik}(t)$ is a circularly symmetric complex Gaussian (CSCG) random variable with zero mean and variance $\sigma_{ik}^2$.
Clearly, the first term in (\ref{eq.yk}) is the signal of interest, while the second and third terms are intra-BS and inter-BS interferences, respectively.

Assume that $s_{ik}(t)$ is statistically independent, with zero mean and unit variance, e.g., $\mathbb{E}\{|s_{ik}(t)|^2\}=1$, $\forall i, k$, and that each UE employs single-user detection. Further define the downlink channel covariance matrices
$\mathbf{R}_{jik}:= \mathbb{E}[\mathbf{h}_{jik}\mathbf{h}_{jik}^H]$, $\forall j, i,k$,
where $\mathbb{E}[\cdot]$ denotes the average over the time-fluctuating fading channel.
Based on the signal model (\ref{eq.yk}),
the long-term signal-to-interference-plus-noise-ratio (SINR) at $\text{UE}_{ik}$ can be expressed as
\cite{schubert04, huang10, alex10}
%\begin{align}\label{sinr}
%    &\text{SINR}_{ik}(\{\mathbf{w}_{ik}\}) \notag\\
%    &\quad= \frac{|\mathbf{h}_{iik}^H \mathbf{w}_{ik}|^2}{\sum_{l\neq k} (|\mathbf{h}_{iik}^H \mathbf{w}_{il}|^2) +
%    \sum_{j \neq i}{\sum_{l}{|\mathbf{h}_{jik}^H \mathbf{w}_{jl}|^2}} +
%    \sigma_{ik}^2}~.
%\end{align}
\begin{align}\label{sinr}
    &\text{SINR}_{ik}(\{\mathbf{w}_{ik}\}) \notag\\
    &\quad= \frac{\mathbf{w}_{ik}^H\mathbf{R}_{iik}\mathbf{w}_{ik}}{\sum_{l\neq k} (\mathbf{w}_{il}^H\mathbf{R}_{iik}\mathbf{w}_{il}) +
    \sum_{j \neq i}{\sum_{l}{\mathbf{w}_{jl}^H\mathbf{R}_{jik}\mathbf{w}_{jl}}} +
    \sigma_{ik}^2}~.
\end{align}
To guarantee the user QoS, it is required that
\begin{equation}\label{eq.sinr}
    {\text{SINR}}_{ik} (\{\mathbf{w}_{ik}\}) \geq \gamma_{ik}, \quad \forall i, k
\end{equation}
where $\gamma_{ik}$ denotes the target SINR value per $\text{UE}_{ik}$.

%Clearly, the transmit power at each $\text{BS}_i$ is given by
%\begin{equation}\label{eq.pipi}
%    P_{i} = \sum_{k=1}^K \text{tr} (\mathbf{w}_{ik} \mathbf{w}_{ik}^H).
%\end{equation}
%where $P_i^{\max}$ is the upper bound for $P_i$.

\subsection{RES and Two-Way Energy Trading}\label{sec:sg}

Given the beamforming vectors $\{\mathbf{w}_{ik}\}$, the transmit power of each $\text{BS}_i$ is given by
\begin{equation}\label{eq.pipi}
    P_{i} = \sum_{k=1}^K \mathbf{w}_{ik}^H \mathbf{w}_{ik}.
\end{equation}
In conventional cellular networks, the BSs can only buy energy from the grid to supply the needed $P_{i}$ for data transmission.
Powered by a smart microgrid, the BSs in our system can harvest RES. They need only to buy energy when the amount of harvested energy is insufficient. Furthermore, with two-way energy trading, BSs can even sell surplus energy to the main grid to reduce transaction cost.

Let $e_i$ denote the (random) energy harvested by $\text{BS}_i$, which is generated according to a stationary random process with unknown distribution. %In the current big data era, each $\text{BS}_i$ has a local database where a large number of i.i.d. $e_i$ realizations are collected (in the past).
For convenience, the energy harvesting interval is normalized to unity; thus, the terms ``energy'' and ``power'' will be used interchangeably throughout the paper.

For $\text{BS}_i$, the energy shortfall or surplus is $[P_i - e_i]^+$ or $[e_i - P_i]^+$, where $[a]^+:= \max\{a,0\}$. Clearly, both the
shortage and surplus energy amounts are non-negative, and we have at most one of them be positive. Suppose that the energy can be purchased from the grid at price $a_i$, while the energy is sold to the grid at price $b_i$ per $\text{BS}_i$.
To fully harness the capability of RES, the transaction prices may fluctuate with the amount of harvested energy; e.g.,
when grid-wise RES amount increases, the buying and selling prices may go down, encouraging more energy consumption from the end users.
Hence, the transaction prices obey a certain distribution that could be correlated to the random process of RES.
Note that we shall always have $a_i \geq b_i$,\footnote{For American electricity markets, a single pricing mechanism is used where $a_i = b_i, \forall i$ holds in most of the scenarios. This is a special case of the stated pricing situation, which in fact facilitates the calculation of cost in \eqref{eq.cost} since the absolute value function would vanish \cite{Yu15}.} to avoid meaningless buy-and-sell activities of the BS for profit.
With the random state $s_i :=\{a_i, b_i, e_i\}$, $\forall i$, the net transaction cost per $\text{BS}_i$ is thus given by
\begin{align}\label{eq.cost}
\hat{f}_i(P_i, s_i) &= a_i \cdot [P_i - e_i]^+ - b_i \cdot [e_i - P_i]^+ \notag\\
&= \alpha_i \cdot |P_i - e_i| + \beta_i \cdot (P_i - e_i)
\end{align}
where $\alpha_i := \frac{a_i-b_i}{2} \geq 0 $ and $\beta_i := \frac{a_i+b_i}{2} \geq 0$.

\subsection{Two Assumptions}

For the RES powered multicell downlink system, the following operational conditions are assumed.
\begin{enumerate}
\item[{\bf as-1)}] The downlink channel covariance matrices $\mathbf{R}_{ijk}$ remain unchanged, $\forall i,j,k$; and each $\text{BS}_i$ has the knowledge of local CSI $\{\mathbf{R}_{ijk}\; \forall j,k\}$ available.

\item[{\bf as-2)}] The buying price $a_i$, selling price $b_i$, and RES amount $e_i$ are unknown; yet, each $\text{BS}_i$ has a local database where a large number of $s_i:=\{a_i,b_i,e_i\}$ realizations are collected (in the past).
\end{enumerate}

The invariant $\mathbf{R}_{ijk}$ assumption in as-1) holds as long as the downlink channels $\mathbf{h}_{ijk}$ are (wide-sense) stationary over the resource scheduling interval of interest. As $\mathbf{R}_{ijk}$ can remain unchange over a relatively long period, it is also reasonable to assume that each $\text{BS}_i$ can have the local $\{\mathbf{R}_{ijk}\; \forall j,k\}$ available through e.g., effective channel-covariance estimation and feedback schemes.
By as-2), we essentially look for an efficient ahead-of-time scheduling in the presence of uncertain buying, selling prices and RES amounts.
%In some power markets, a day-ahead energy scheduling and real-time dispatch policy is adopted, where information of $s_i, \forall i$ are unknown \cite{Yu15}.
In the current big data era, however, a large number of $s_i$ realizations (i.e., collected from past measurements) can be stored in the database per $\text{BS}_i$. Hence, a data-driven approach can be developed to deal with the stochasticity of $s_i$.

Note that the assumption of uncertain electricity prices and RES amounts in as-2) is consistent with the typical smart-grid scenarios. As the power grid is shouldering heavier loads during peak hours everyday, a dynamic electricity pricing mechanism needs to be employed to balance power demands between peak and off-peak hours to maintain grid-wide stability. With high-penetration RES and two-way energy trading, the amounts of harvested energy can be considered in deciding transaction prices, in order to maintain a stable power output of the main grid. In the electricity markets, an ahead-of-time energy scheduling and real-time dispatch policy can be then adopted, where the RES amounts and real-time electricity prices are unknown \cite{Yu15}. In the presence of such uncertainties, an efficient method to minimize the average transaction cost, as well as to control the risk of very high cost, is critical for the smart-grid powered BSs. 

\section{Convex Problem Formulation}\label{sec:cvxform}

Based on the models of Section II, we next formulate our CMBF design problem. Integrating RES into the CMBF design requires risk-cognizant dispatch of resources to account for the stochastic availability of renewables. To this end, our idea is to capitalize on the novel notion of CVaR.

\subsection{CVaR based Energy Transaction Cost}\label{sec:cvarcost}

CVaR has been widely used in various real-world applications, especially in the finance area, to account not only for the expected cost of the resource allocation actions, but also for their ``risks'' \cite{rocka02, rocka00, quaranta08, Yu15}. In the present context, recall that the transaction cost $\hat{f}_i(P_i, s_i)$ in (\ref{eq.cost}) is a function associated with the decision variable $P_i$ and the random state $s_i:=\{a_i,b_i,e_i\}$. Assume that the transaction prices $a_i$ and $b_i$, and the RES amount $e_i$ are generated according to some stationary random processes with the joint probability density function $p(s_i)$.
The probability of $\hat{f}_i(P_i, s_i)$ not exceeding a threshold $\eta_i$ is then given by the right-continuous
cumulative distribution function %(CDF)
\begin{equation}
\Psi(P_i,\eta_i) = \int\limits_{\hat{f}_i(P_i,s_i)\le \eta_i} p(s_i)\, ~\text{d} s_i.
\end{equation}

Given a prescribed confidence level $\theta \in (0, 1)$, we can define the $\theta$-value-at-risk (VaR) as the generalized inverse of $\Psi$, i.e.,
\begin{equation}
\eta_\theta(P_i) := \min\{\eta_i \in\mathbb{R}~|~\Psi(P_i,\eta_i) \ge \theta\}. \label{eq:VaR}
\end{equation}
Clearly, $\theta$-VaR is essentially the $\theta$-quantile of the random $\Psi(P_i,\eta_i)$. Since $\Psi$ is non-decreasing in $\eta_i$, $\eta_\theta(P_i)$ comes out as the lower endpoint of the solution interval satisfying $\Psi(P_i,\eta_i) \ge \theta$.

Based on $\eta_\theta(P_i)$, we can further define the $\theta$-CVaR ($\Psi_{\theta}$) as the mean of the $\theta$-tail distribution of $\hat{f}_i(P_i,s_i)$, which is given by
\begin{equation}\label{def.cvar}
\Psi_{\theta}(P_i,\eta_i) :=
\left\{\begin{array}{cc}
0, &\mbox{if}~\eta_i < \eta_\theta(P_i)\\
\frac{\Psi(P_i,\eta_i)-\theta}{1-\theta},  &\mbox{if}~\eta_i \geq \eta_\theta(P_i)
\end{array}\right.
\end{equation}
Truncated and re-scaled from $\Psi$, function $\Psi_{\theta}$ is nondecreasing, right-continuous, and in fact a (conditional) distribution function. As a result, $\theta$-CVaR is the expected cost in the worst $100(1-\theta)\%$ scenarios. Clearly, as $\theta \rightarrow 0$, the $\theta$-CVaR becomes the (unconditional) expected transaction cost. On the other hand, as $\theta \rightarrow 1$, the $\theta$-CVaR approaches to the (conservative) worst-case transaction cost. The value of $\theta$ is commonly chosen as no more than $0.99$; use of the corresponding $\theta$-CVaR as cost function could lead to a both efficient and robust resource allocation action.

It was shown that the $\theta$-CVaR can be also obtained as the optimal value of the following optimization problem \cite{rocka02}
\begin{equation}
\phi_\theta(P_i) := \min_{\eta_i \in \mathbb{R}} \left\{\eta_i +\frac{1}{1-\theta}\mathbb{E}_{s_i}\{[\hat{f}_i(P_i, s_i)-\eta_i]^{+}\}\right\}.\label{eq:CVaR}
\end{equation}
Define $F_i(P_i, \eta_i):=\eta_i +\frac{1}{1-\theta}\mathbb{E}_{s_i}\{[\hat{f}_i(P_i, s_i)-\eta_i]^{+}\}$. As with \cite[Proposition 1]{Yu15}, we have:
\begin{lemma}
The function $F_i(P_i, \eta_i)$ is jointly convex in $(P_i, \eta_i)$.
\end{lemma}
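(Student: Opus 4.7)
The plan is to decompose $F_i$ into a sum of terms and verify convexity by composition rules.

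First I would fix a realization $s_i=\{a_i,b_i,e_i\}$ and observe that
\[
\hat{f}_i(P_i,s_i)=\alpha_i|P_i-e_i|+\beta_i(P_i-e_i)
\]
is convex in $P_i$, because $\alpha_i\ge 0$ makes $\alpha_i|P_i-e_i|$ a nonnegatively-scaled absolute value (hence convex) and $\beta_i(P_i-e_i)$ is affine. Next, viewing the expression $\hat{f}_i(P_i,s_i)-\eta_i$ as a function of $(P_i,\eta_i)$, convexity is preserved: it is the sum of a function convex in $P_i$ (constant in $\eta_i$) and the affine function $-\eta_i$, and such a sum is jointly convex in $(P_i,\eta_i)$.

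Then I would apply the composition rule. The scalar map $u\mapsto[u]^+=\max\{u,0\}$ is convex and nondecreasing; composing it with the jointly convex function $\hat{f}_i(P_i,s_i)-\eta_i$ of $(P_i,\eta_i)$ yields a jointly convex function $[\hat{f}_i(P_i,s_i)-\eta_i]^+$ for every realization $s_i$. Taking expectation over $s_i$ is a nonnegative linear operation and hence preserves joint convexity, so $\mathbb{E}_{s_i}\{[\hat{f}_i(P_i,s_i)-\eta_i]^+\}$ is jointly convex in $(P_i,\eta_i)$ (one can invoke Fubini/Tonelli to justify that the expectation integral interchanges with convex combinations, assuming finite expectation which holds under mild moment conditions on $s_i$). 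Finally, $F_i$ is the sum of the affine term $\eta_i$ and $\frac{1}{1-\theta}$ times this convex expectation, with $\frac{1}{1-\theta}>0$ since $\theta\in(0,1)$; thus $F_i(P_i,\eta_i)$ is jointly convex.

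The only step that is not purely mechanical is the justification for pulling the expectation out while preserving convexity. I would handle it by directly verifying the definition: for any $\lambda\in[0,1]$ and any two points $(P_i^{(1)},\eta_i^{(1)})$, $(P_i^{(2)},\eta_i^{(2)})$, apply pointwise (in $s_i$) the joint convexity of $[\hat{f}_i(P_i,s_i)-\eta_i]^+$ and then integrate both sides against $p(s_i)$, using monotonicity of the Lebesgue integral. Everything else is a routine application of standard convex-analysis composition and scaling rules, so this is where I would place the only explicit argument.
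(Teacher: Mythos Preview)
Your proof is correct and follows essentially the same approach as the paper: both argue that $\hat{f}_i(P_i,s_i)$ is convex in $P_i$ because $a_i\ge b_i$ (equivalently $\alpha_i\ge 0$), and then invoke preservation of convexity under the positive-part map and under expectation. The paper's proof is much terser---it simply cites ``convexity-preserving operators of projection and expectation'' from \cite[Sec.~3.2]{convex}---whereas you spell out the composition rule (in particular that $[\cdot]^+$ is nondecreasing as well as convex) and give an explicit pointwise-plus-integration argument for the expectation step; these are welcome elaborations but not a different route.
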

\begin{proof}
Given that $a_i \geq b_i$, it follows by convexity of the absolute value function that $\hat{f}_i(P_i, s_i)$ is convex in $P_i$. Due to the convexity-preserving operators of projection and expectation, the lemma readily follows \cite[Sec.~3.2]{convex}.
\end{proof}

\begin{remark}\textit{(Properties of the cost function)}
The condition $a_i \geq b_i$, $\forall i \in \cal{I}$, is sufficient but not necessary to guarantee that the function $\hat{f}_i(P_i, s_i)$ is convex. The objective function can take forms other than the conditional expected transaction cost function. Our proposed method is still applicable as long as the objective function is convex with respect to its effective domain. This can even include some cases when the selling prices exceed the purchase prices. %If the objective function is not convex, a new solver is needed for the resulting nonconvex problem.
\end{remark}

The convexity of $F_i(P_i, \eta_i)$ facilitates the next convex CMBF problem formulation.

\subsection{CMBF Problem}\label{sec:cmbfp}

Adopting the CVaR-based cost function, we pursue the CMBF design that minimizes system-wide energy transaction cost subject to (s.t.) user SINR constraints. By (\ref{eq:CVaR}), minimizing CVaR $\phi_\theta(P_i)$ with respect to $P_i$ is equivalent to minimizing $F_i(P_i, \eta_i)$ over $(P_i, \eta_i)$.
Then mathematically, our problem can be formulated as
\begin{subequations}~\label{eq.task}
\begin{align}
\min_{\{P_i, \eta_i, \mathbf{w}_{ik}\}} &  \sum_{i=1}^{I} F_i(P_i, \eta_i) \label{eq.prob} \\
\mathrm{s.t.} &~~P_{i} = \sum_{k=1}^K \mathbf{w}_{ik}^H \mathbf{w}_{ik} , \quad \forall i  \label{eq.w}\\
&~~{\text{SINR}}_{ik} (\{\mathbf{w}_{ik}\}) \geq \gamma_{ik}, \quad \forall i,\forall k . \label{eq.sinrm}
\end{align}
\end{subequations}

%\begin{remark}\textit{(Availability of electricity prices)}.
%In this paper, the electricity prices $\{\mathbf{a},\mathbf{b}\}$
%are assumed to be perfectly known to the BSs for energy transaction.
%Yet, such an assumption can be readily extended to a general setup
%by considering a real-time dynamic pricing mechanism,
%where the scheduled time horizon is more than one unit.
%Imperfect price information can be modeled by
%appropriately designing the function $T_i(P_i,E_i)$ [cf.~\eqref{eq.cost}].
%For instance, the expectation can be also taken over the randomly distributed prices in~\eqref{eq.that}.
%Moreover, the proposed method can be applied to the pricing mechanism with certain day-ahead purchase price and uncertain real-time transaction prices,
%where the real-time prices are modeled as a predicted value plus a tolerable error in \cite{Jie15}.
%\end{remark}

By Lemma 1, the objective function \eqref{eq.prob} is convex. We next rely on the popular semidefinite program (SDP)
relaxation technique to convexify the non-convex constraints (\ref{eq.sinrm}). To this end, we rewrite \eqref{eq.sinrm} into
\begin{align}
   &\frac{1}{\gamma_{ik}}\mathbf{w}_{ik}^H\mathbf{R}_{iik}\mathbf{w}_{ik} - \sum_{l\neq k}\mathbf{w}_{il}^H\mathbf{R}_{iik}\mathbf{w}_{il} \notag\\
   &\geq \sum_{j\neq i}^I
\sum_{l=1}^{K}{\mathbf{w}_{jl}^H\mathbf{R}_{jik}\mathbf{w}_{jl}}
 +\sigma_{ik}^2. \label{eq.sinrm2}
\end{align}
%Define the matrices $\mathbf{H}_{jik}:=\mathbf{h}_{jik}\mathbf{h}_{jik}^H$, $\forall j, i,k$.
Let $\mathbf{W}_{ik} := \mathbf{w}_{ik} \mathbf{w}_{ik}^H$. It clearly holds that $\mathbf{W}_{ik} \succeq \mathbf{0}$, and $\text{rank}(\mathbf{W}_{ik}) = 1$, $\forall i, k$. Dropping the latter rank constraints, the SINR constraints (\ref{eq.sinrm2}) can be relaxed to the convex SDP constraints, and the problem (\ref{eq.task}) becomes
\begin{subequations}~\label{tasknew}
\begin{align}
& \min_{\{P_i, \eta_i, \mathbf{W}_{ik}\}} \sum_{i=1}^{I} F_i(P_i, \eta_i) \label{tasknew1} \\
\mathrm{s.t.} &~~P_i=\sum_{k=1}^{K}\text{tr}(\mathbf{W}_{ik}) , ~~\forall i \label{tasknew0}\\
&~~\frac{1}{\gamma_{ik}}\text{tr}(\mathbf{R}_{iik}\mathbf{W}_{ik})-\sum_{l\neq k}\text{tr}(\mathbf{R}_{iik}\mathbf{W}_{il}) \notag\\
& ~~~~~~\geq \sum_{j\neq i}^I \sum_{l=1}^{K}
\text{tr}(\mathbf{R}_{jik}\mathbf{W}_{jl})+\sigma_{ik}^2, ~~\forall i,k \label{tasknew3}\\
&~~\mathbf{W}_{ik} \succeq \mathbf{0},~~ \forall i,k. ~\label{tasknew4}
\end{align}
\end{subequations}
where $\text{tr}(\cdot)$ denotes the trace operator. Problem (\ref{tasknew}) is a convex program that can be efficiently solved by interior-point methods in polynomial time. With $\mathbf{W}_{ik}^*$, $\forall i,k$, denoting the optimal beamforming matrices for (\ref{tasknew}), we can show that:
\begin{lemma}
Under either of the following two conditions: i) $\text{rank}(\mathbf{R}_{ijk})=1$, $\forall i, j, k$, or ii) $I \leq 2$, we can always have $\text{rank}(\mathbf{W}_{ik}^*) = 1$, $\forall i, k$, for problem (\ref{tasknew}).
\end{lemma}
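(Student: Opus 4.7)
The plan is to exploit the KKT conditions of the convex program (\ref{tasknew}) to force rank one on each optimal $\mathbf{W}_{ik}^*$. First, I would form the partial Lagrangian, introducing dual variables $\mu_i\in\mathbb{R}$ for the equality (\ref{tasknew0}), $\lambda_{ik}\geq 0$ for each SINR constraint (\ref{tasknew3}), and $\mathbf{Z}_{ik}\succeq\mathbf{0}$ for the PSD constraint (\ref{tasknew4}). Since (\ref{tasknew}) is convex and Slater's condition clearly holds (a strictly feasible point is obtained by scaling up any admissible set of beamformers), strong duality provides a primal--dual optimum $(\mathbf{W}_{ik}^*,\mu_i^*,\lambda_{ik}^*,\mathbf{Z}_{ik}^*)$ satisfying the KKT system. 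Setting the matrix derivative of the Lagrangian with respect to $\mathbf{W}_{ik}$ to zero, after eliminating $P_i$ via (\ref{tasknew0}), yields the stationarity relation
\[
\mathbf{Z}_{ik}^* = \mathbf{B}_{ik}^* - \frac{\lambda_{ik}^*}{\gamma_{ik}}\mathbf{R}_{iik}, \qquad \mathbf{B}_{ik}^* := \mu_i^*\mathbf{I} + \sum_{l\neq k}\lambda_{il}^*\mathbf{R}_{iil} + \sum_{j\neq i}^{I}\sum_{l=1}^{K}\lambda_{jl}^*\mathbf{R}_{ijl},
\]
together with complementary slackness $\mathbf{Z}_{ik}^*\mathbf{W}_{ik}^*=\mathbf{0}$, which forces $\mathrm{rank}(\mathbf{W}_{ik}^*)\leq N_t-\mathrm{rank}(\mathbf{Z}_{ik}^*)$.

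For case (i), I would first argue $\mathbf{B}_{ik}^*\succ\mathbf{0}$. Stationarity in $P_i$ identifies $\mu_i^*$ with an element of $\partial_{P_i} F_i(P_i^*,\eta_i^*)$; using the CVaR definition in (\ref{eq:CVaR}) together with the bounds $b_i\leq \partial_{P_i}\hat{f}_i(P_i,s_i)\leq a_i$ implied by (\ref{eq.cost}), the subdifferential is lower-bounded by $b_i$, giving $\mu_i^*>0$ whenever $b_i>0$. With each $\mathbf{R}_{iik}$ rank one under (i), $\mathbf{Z}_{ik}^*$ is a full-rank PD matrix minus a single rank-one term, so $\mathrm{rank}(\mathbf{Z}_{ik}^*)\geq N_t-1$. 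This yields $\mathrm{rank}(\mathbf{W}_{ik}^*)\leq 1$, and $\mathbf{W}_{ik}^*=\mathbf{0}$ is excluded by the noise $\sigma_{ik}^2>0$ on the right-hand side of (\ref{tasknew3}); hence $\mathrm{rank}(\mathbf{W}_{ik}^*)=1$.

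Case (ii) with $I\leq 2$ requires a different device because $\mathbf{R}_{iik}$ may have high rank. For $I=1$ the problem collapses to the classical single-cell multi-user SINR-constrained SDP, and rank-one tightness follows from the Bengtsson--Ottersten construction of a dominant generalized eigenvector extracted from the KKT dual matrix. For $I=2$ I would perform an explicit rank-reduction step in the spirit of Huang--Palomar: given any optimal $(\mathbf{W}_{ik}^*)$ with some $\mathrm{rank}(\mathbf{W}_{ik}^*)>1$, eigen-decompose each block as $\sum_n\nu_n\mathbf{v}_n\mathbf{v}_n^H$ and construct a rank-one replacement (by redistributing the eigen-energy across the summands and possibly re-phasing) that still satisfies all $K$ SINR constraints per BS and the equality (\ref{tasknew0}) while leaving $\sum_i F_i(P_i^*,\eta_i^*)$ unchanged. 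The limited inter-cell coupling when $I=2$ makes such a construction feasible.

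The main obstacle will be the $I=2$ case. The direct KKT route of (i) breaks down because $\mathrm{rank}(\mathbf{Z}_{ik}^*)$ cannot be bounded from below without the rank-one hypothesis on $\mathbf{R}_{iik}$, so one must exhibit an explicit rank-reduction construction and verify that it simultaneously preserves every SINR constraint and the nonlinear coupling through $F_i(P_i,\eta_i)$ -- the latter being delicate because $F_i$ need not be monotone in $P_i$, so even a cost-neutral redistribution of $\mathrm{tr}(\mathbf{W}_{ik}^*)$ must be argued carefully. A secondary subtlety in case (i) arises when $b_i=0$: then $\mu_i^*$ may vanish and $\mathbf{B}_{ik}^*$ need not be strictly PD. This can be addressed either by perturbing the selling price and passing to the limit, or by showing that the interference-plus-noise terms $\sum_{l\neq k}\lambda_{il}^*\mathbf{R}_{iil}+\sum_{j\neq i}\sum_l\lambda_{jl}^*\mathbf{R}_{ijl}$ already contribute positive definiteness along the directions required by complementary slackness.
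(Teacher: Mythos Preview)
Your two cases receive rather different verdicts.

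\textbf{Case (i).} Your KKT--complementary slackness route is a legitimate alternative to the paper's argument, and with the $\mu_i^*>0$ step filled in (which works at the joint optimum because stationarity in $\eta_i$ forces $\Pr\{\hat f_i\ge \eta_i^*\}\approx 1-\theta$, so the $P_i$-subgradient picks up at least the conditional mean of $b_i$), it goes through. The paper takes a different and more elementary path: when $\mathbf{R}_{ijk}=\mathbf{h}_{ijk}\mathbf{h}_{ijk}^H$, it rewrites the \emph{original} (pre-SDR) problem as an SOCP by (a) replacing the equality $P_i=\sum_k\|\mathbf w_{ik}\|^2$ with $\le$ using the monotonicity of $F_i$ in $P_i$, and (b) turning each SINR constraint into a second-order cone via the standard phase-rotation trick that makes $\mathbf h_{iik}^H\mathbf w_{ik}$ real. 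Solving the SOCP yields $\{\mathbf w_{ik}^*\}$ directly, hence rank-one $\mathbf W_{ik}^*$ for the SDR. This bypasses the $\mu_i^*>0$ and $b_i=0$ subtleties altogether; your KKT approach buys generality (it would extend to any objective whose $P_i$-subgradient is strictly positive), at the price of those edge-case checks.

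\textbf{Case (ii).} Here your proposal has a real gap. You invoke Huang--Palomar by name but then describe an ad-hoc ``eigen-energy redistribution with re-phasing,'' which is not their mechanism and which you do not actually carry out. The paper's argument is a clean counting one: freeze the optimal $P_i^*$ and consider the resulting \emph{feasibility} SDP in the $IK$ matrices $\{\mathbf W_{ik}\}$ subject to the $I$ trace equalities and the $IK$ SINR inequalities. By \cite[Theorem 3.2]{huang10} there exists a solution with $\sum_{i,k}\mathrm{rank}^2(\mathbf W_{ik})\le IK+I$. Since every $\mathbf W_{ik}$ must be nonzero (else its SINR constraint is violated), each rank is at least $1$; if any rank were $\ge 2$ one would already have $\sum\mathrm{rank}^2\ge (IK-1)+4=IK+3>IK+I$ for $I\le 2$, a contradiction. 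Fixing $P_i^*$ first is precisely what decouples the rank-reduction from $F_i$, so there is no ``delicate'' interaction with the cost to worry about. Incidentally, your stated concern that $F_i$ ``need not be monotone in $P_i$'' is incorrect: $\hat f_i(P_i,s_i)$ is nondecreasing in $P_i$ (slope $a_i$ or $b_i$, both $\ge 0$), hence so is $[\hat f_i-\eta_i]^+$ and therefore $F_i$; the paper uses this monotonicity explicitly in its case-(i) SOCP reformulation.
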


\begin{proof}
Consider the two conditions one by one.

c-1): $\mathbf{R}_{ijk}$ is rank-one, i.e., $\mathbf{R}_{ijk}=\mathbf{h}_{ijk}\mathbf{h}_{ijk}^H$, $\forall i,j,k$.
In this case, \eqref{eq.task} can be recast into
\begin{subequations}~\label{taskproof1}
\begin{align}
& \min_{\{P_i, \eta_i, \mathbf{w}_{ik}\}} \sum_{i=1}^{I} F_i(P_i, \eta_i) \label{taskproof10} \\
\mathrm{s.t.} &~~\sum_{k=1}^{K}\text{tr}(\mathbf{w}_{ik}\mathbf{w}_{ik}^H)\leq P_i , ~~\forall i \label{taskproof11}\\
&\frac{1}{\gamma_{ik}}|\mathbf{h}_{iik}^H\mathbf{w}_{ik}|^2 - \sum_{l\neq k}|\mathbf{h}_{iik}^H\mathbf{w}_{il}|^2 \notag \\
   &\geq \sum_{j\neq i}^I\sum_{l=1}^{K}{|\mathbf{h}_{jik}^H\mathbf{w}_{jl}}|^2+\sigma_{ik}^2, ~~\forall i,k. \label{taskproof12}
\end{align}
\end{subequations}
Note that we are allowed to replace ``='' with ``$\leq$'' in \eqref{taskproof11} since $F_i(P_i, \eta_i)$ is monotonically increasing in $P_i$.
Now, \eqref{taskproof11} is in fact a second-order cone (SOC) constraint, $\forall i$, since it is identical to
$\sum_{k} \| \mathbf{w}_{ik}\|^2 \leq \frac{(1+P_i)^2}{4}-\frac{(1-P_i)^2}{4}$,
which is in turn identical to $\sqrt{\sum_{k} \| \mathbf{w}_{ik}\|^2 + (\frac{1-P_i}{2})^2} \leq \frac{1+P_i}{2}$.
The constraint \eqref{taskproof12} is also an SOC constraint as it is equivalent to
\begin{equation}\label{realpart}
\frac{\text{Re}(\mathbf{h}_{iik}^H\mathbf{w}_{ik})}{\gamma_{ik}}\geq
\sqrt{\sum_{l\neq k}|\mathbf{h}_{iik}^H\mathbf{w}_{il}|^2 + \sum_{j\neq i}^I\sum_{l=1}^{K}{|\mathbf{h}_{jik}^H\mathbf{w}_{jl}}|^2+\sigma_{ik}^2 }~,
\end{equation}
where $\text{Re}(\cdot)$ denotes the real part of a complex variable.
To see the equivalence between \eqref{taskproof12} and \eqref{realpart}, one can observe that for any $\mathbf{w}_{ik}$ satisfying \eqref{realpart},
a phase rotation $\mathbf{\tilde{w}}_{ik}:= \mathbf{w}_{ik}\cdot e^{-j(\mathbf{h}_{iik}^H\mathbf{w}_{ik})}$ is feasible for \eqref{taskproof12}.
The two are equivalent as arbitrary phase rotation for beamforming vectors would not affect both the transmit power and the quadratic constraints of interest.
Hence, the original problem \eqref{eq.task} can be reformulated as the following convex second-order cone program (SOCP):
\begin{subequations}~\label{taskproof2}
\begin{align}
& \min_{\{P_i, \eta_i, \mathbf{w}_{ik}\}} \sum_{i=1}^{I} F_i(P_i, \eta_i) \label{taskproof20} \\
\mathrm{s.t.} &~~\sum_{k=1}^{K}\| \mathbf{w}_{ik}\|^2 \leq P_i , ~~\forall i \label{taskproof21}\\
&\sqrt{\sum_{l\neq k}|\mathbf{h}_{iik}^H\mathbf{w}_{il}|^2 + \sum_{j\neq i}^I\sum_{l=1}^{K}{|\mathbf{h}_{jik}^H\mathbf{w}_{jl}}|^2+\sigma_{ik}^2 }
\notag \\
& \leq \frac{1}{\gamma_{ik}}\text{Re}(\mathbf{h}_{iik}^H\mathbf{w}_{ik}) ,~\forall i,k. \label{taskproof22}
\end{align}
\end{subequations}
The optimal transmit beamforming vectors $\{\mathbf{w}_{ik}^*\}_{i,k}$ can be directly obtained by solving \eqref{taskproof2}.
It then readily follows that there always exists the rank-one optimal solution $\mathbf{W}_{ik}^*=\mathbf{w}_{ik}^*{\mathbf{w}_{ik}^*}^H$,
$\forall i,k$, for \eqref{tasknew}.

c-2): $I\leq 2$, and $\mathbf{R}_{ijk}$ can be of any rank.
Suppose that we have solved and obtained the optimal power values $\{P_i^*, \forall i\}$ for \eqref{tasknew}.
Consider the following optimization problem:
\begin{subequations}~\label{taskproof3}
\begin{align}
& \min_{\{\mathbf{W}_{ik}\}} 0 \label{taskproof30} \\
\mathrm{s.t.} &~~\sum_{k=1}^{K}\text{tr}(\mathbf{W}_{ik}) = P_i^* , ~~\forall i \label{taskproof31}\\
&~~\frac{1}{\gamma_{ik}}\text{tr}(\mathbf{R}_{iik}\mathbf{W}_{ik}) \geq \sum_{l\neq k}\text{tr}(\mathbf{R}_{iik}\mathbf{W}_{il}) \notag\\
& ~~~~~~+\sum_{j\neq i}^I \sum_{l=1}^{K} \text{tr}(\mathbf{R}_{jik}\mathbf{W}_{jl})+\sigma_{ik}^2, ~~\forall i,k \label{taskproof32}\\
&~~\mathbf{W}_{ik} \succeq \mathbf{0},~~ \forall i,k. \label{taskproof33}
\end{align}
\end{subequations}
Problem (\ref{taskproof3}) is an SDP with $IK$ variables and $IK+I$ constraints.
Let $\mathbf{\tilde{W}}_{ik}^*, \forall i, k$ denote the optimal solution for \eqref{taskproof3}.
By \cite[Theorem 3.2]{huang10}, there always exists $\{\mathbf{\tilde{W}}_{ik}^*\}_{i,k}$ with
\begin{equation}\label{rankik}
\sum_{i=1}^I \sum_{k=1}^K \text{rank}^2(\mathbf{\tilde{W}}_{ik}^*) \leq IK+I.
\end{equation}
It is also clear that all matrices $\{\mathbf{\tilde{W}}_{ik}^*\}_{i,k}$ are non-zero matrices, i.e., $\text{rank}(\mathbf{\tilde{W}}_{ik}^*)\geq 1$, $\forall i,k$, otherwise the corresponding SINR constraints will be violated. Together with \eqref{rankik}, we then readily have: $\text{rank}(\mathbf{\tilde{W}}_{ik}^*)= 1$, $\forall i,k$, when $I \leq 2$. It is easy to show that such rank-one $\{\mathbf{\tilde{W}}_{ik}^*\}_{i,k}$ are also feasible and optimal for problem \eqref{tasknew}; the lemma readily follows.
\end{proof}

\begin{remark}\textit{(Tightness of SDP relaxation)}
Lemma 2 implies that the SDP relaxation (\ref{tasknew}) is tight when i) $\text{rank}(\mathbf{R}_{ijk})=1$, $\forall i, j, k$, or ii) $I \leq 2$. Case i) in fact corresponds to the time-invariant channel case considered in our conference version \cite{conf}. Case ii) holds when there are at most two BSs involved in the coordinated beamforming. Although the tightness of the SDR (\ref{tasknew}) could not be rigorously proved for other general cases, our extensive numerical results indicate that this actually always holds.\footnote{Proof for the tightness of such an SDR for general cases will be an interesting direction to pursue in our future works.} With the rank-one matrices $\{\mathbf{W}_{ik}^*\}$, the optimal CMBF solution $\{\mathbf{w}_{ik}^*\}$ for the original (\ref{eq.task}) can be exactly obtained by eigen-decomposition,
i.e., $\mathbf{W}_{ik}^* = \mathbf{w}_{ik}^* {\mathbf{w}_{ik}^*}^H$, $\forall i, k$.
\end{remark}
\begin{remark}\textit{(Approximate optimality of $\{\mathbf{w}_{ik}\}$)}
In more general cases where the optimal solutions $\{\mathbf{W}_{ik}^*\}$ for the SDP relaxed problem \eqref{tasknew}
are not rank-one, the suboptimal solutions $\{\mathbf{w}_{ik}\}$ for the original problem \eqref{eq.task}
can be efficiently obtained via a randomization and scaling algorithm proposed in \cite{nich06}.
Essentially we calculate the eigen-decomposition of $\mathbf{W}_{ik}^* = \mathbf{U}_{ik}\mathbf{\Sigma}_{ik}\mathbf{U}^H_{ik},
\forall i,k$, then design $\mathbf{w}_{ik}$ by a vector of carefully chosen random variables such that
$\mathbf{w}_{ik} = \mathbf{U}_{ik}\mathbf{\Sigma}^{1/2}_{ik}\mathbf{v}_{ik}$,
and $\mathbb{E}[\mathbf{w}_{ik}\mathbf{w}_{ik}^H]=\mathbf{W}_{ik}^*$.
Scale $\mathbf{w}_{ik}$ if it violates constraints \eqref{eq.w} and \eqref{eq.sinrm},
so that all $\{\mathbf{w}_{ik}\}$ satisfy the constraints.
The ``best'' beamformers for the original problem  \eqref{eq.task} are the ones
that require the smallest scaling, which are then outputs as the approximate solutions.
Interested readers can refer to \cite{nich06} for a detailed discussion.
\end{remark}

\section{Distributed CMBF via Stochastic ADMM}\label{sec:distcmbf}

Directly solving the SDP problem (\ref{tasknew}) calls for a central controller which has the global CSI. For the multicell downlink system, it is certainly desirable to obtain the CMBF solution in a decentralized manner using only local CSI at each BS per as-1).
In the present context, development of such a distributed solver also needs to take into account the stochasticity of RES.
%{\color{blue}Therefore, many stochastic algorithms such as stochastic subgradient method, stochastic primal-dual method
%and projected stochastic gradient descent method are not applicable or suitable in our case of wireless communications.
%}
To this end, we resort to a {\em stochastic} ADMM approach.

\subsection{Review of Stochastic ADMM}

To illustrate the idea of stochastic ADMM, let us consider the following separable convex minimization problem
with linear equality constraints:
\begin{subequations}
\label{eq:genADMM}
\begin{align}
&\min_{\mathbf{x}\in \mathcal{X}, \mathbf{z}\in \mathcal{Z}}
 \mathbb{E}_{\boldsymbol{\xi}}\{f(\mathbf{x}, \boldsymbol{\xi})\} + g(\mathbf{z}) \label{genADMM1}\\
&\quad\mathrm{subject~to:}~ \mathbf{B}\mathbf{z} = \mathbf{A}\mathbf{x}
\label{eq:genADMM-lin}
\end{align}
\end{subequations}
where $\boldsymbol{\xi}$ is a random vector, obeying a fixed but unknown distribution. %To tackle this problem, we define a {\em regular empirical risk function} $f_{\text {emp}}(\mathbf{x}):=\frac{1}{N}\sum_{n=1}^N f(\mathbf{x}, \boldsymbol{\xi}(n))$, which approximates the value of $\mathbb{E}_{\boldsymbol{\xi}}\{f(\mathbf{x}, \boldsymbol{\xi})\}$ based on the average over $N$ instance function values based on a large number of random samples $\boldsymbol{\xi}(n))$, $n=1,\ldots, N\}$. By the deterministic ADMM method, we could now minimize the augmented Lagrangian function
%\begin{align}\label{eq.aug}
%{\cal L}_{\rho}(\mathbf{x}, \mathbf{z}, \boldsymbol{\lambda}):= &f_{\text {emp}}(\mathbf{x}) + g(\mathbf{z}) - \langle \boldsymbol{\lambda},
%\mathbf{B}\mathbf{z} -\mathbf{A}\mathbf{x}\rangle \notag\\
%&+\frac {\rho}{2} \|\mathbf{B}\mathbf{z} -\mathbf{A}\mathbf{x}\|^2
%\end{align}
%where $\boldsymbol{\lambda}$ collects the Lagrangian multipliers associated with the equality constraint, and $\rho$ is a pre-defined penalty parameter
%controlling the violation of primal feasibility.
%However, this method does not always yield a closed-form solution due to the quadratic term in the Lagrangian function.
Since the probability distribution function of $\boldsymbol{\xi}$ is unknown,
the deterministic ADMM principle cannot be directly applied to solve (\ref{eq:genADMM}).
Suppose that a sequence of i.i.d. observations for the random vector $\boldsymbol{\xi}$ can be drawn.
To apply the deterministic ADMM for \eqref{eq:genADMM}, we need to approximate the first term in \eqref{genADMM1}
by its empirical expectation via Monte Carlo sampling \cite{Yu15},
which requires a visit to all the samples per iteration.
This can be slow and computationally expensive in the current big data era due to data proliferation.
Hence, generalizing the classical and the linearized ADMM, a {\em stochastic} ADMM approach was then proposed in \cite{hua13}.
Specifically, we define an {\em approximated} augmented Lagrangian function
\begin{align}\label{eq.augprox}
&\hat{\cal L}_{\rho, m}(\mathbf{x}, \mathbf{z}, \boldsymbol{\lambda}):= f(\mathbf{x}_m, \boldsymbol{\xi}_{m+1}) 
+ \mathbf{x}^T f'(\mathbf{x}_m, \boldsymbol{\xi}_{m+1}) + g(\mathbf{z})\notag\\
&\quad-\boldsymbol{\lambda}^T(\mathbf{B}\mathbf{z} -\mathbf{A}\mathbf{x}) +\frac {\rho}{2} \|\mathbf{B}\mathbf{z} -\mathbf{A}\mathbf{x}\|^2
+ \frac{\|\mathbf{x} - \mathbf{x}_m\|^2}{2\zeta_{m+1}}
\end{align}
where $m$ is the iteration index, $\boldsymbol{\lambda}$ collects the Lagrangian multipliers associated with the equality constraint, and $\rho$ is a pre-defined penalty parameter controlling the violation of primal feasibility, as with the classic ADMM. Yet, we replace $\mathbb{E}_{\boldsymbol{\xi}}\{f(\mathbf{x}, \boldsymbol{\xi})\}$ with a first-order approximation of $f(\mathbf{x}, \boldsymbol{\xi}_{m+1})$ at  $\mathbf{x}_m$: $f(\mathbf{x}_m, \boldsymbol{\xi}_{m+1}) + \mathbf{x}^T f'(\mathbf{x}_m, \boldsymbol{\xi}_{m+1})$,
in the same flavor of the stochastic mirror descent.
Similar to the linearized ADMM, we also add an $l_2$-norm prox-function $\|\mathbf{x} - \mathbf{x}_m\|^2$
but scale it by a time-varying stepsize $\zeta_{m+1}$,
which is usually set as ${O}(1/\sqrt{m})$ to ensure fast convergence \cite{hua13}.
The stochastic ADMM procedures are summarized in Algorithm 1.

%In our case, $\boldsymbol{\xi}$ captures the stochasticity of the harvested energy, thus obeys the i.i.d. random process.
For the stochastic ADMM algorithm, in each iteration $\mathbf{x}$ is updated based on a single (random) sample; hence, the update costs little time and resources. On the other hand, it was established that this method can approach the globally optimal solution to (\ref{eq:genADMM}) in expectation (or in probability) with good rates of convergence \cite{hua13}.

\begin{algorithm}[t]
\caption{Stochastic ADMM Approach}
\label{algo:ORC}
\begin{algorithmic}[1]
\State {\bf Initialize} $\mathbf{x}_0$, $\mathbf{z}_0$, and set $\boldsymbol{\lambda}_0 = \bf0$.
\For {$m$ = 0, 1, 2, ...}
\State $\mathbf{x}_{m+1} = \argmin\limits_{\mathbf{x}\in \mathcal{X}} \hat{\cal L}_{\rho, m}(\mathbf{x}, \mathbf{z}_m, \boldsymbol{\lambda}_m)$.
\State $\mathbf{z}_{m+1} = \argmin\limits_{\mathbf{z}\in \mathcal{Z}} \hat{\cal L}_{\rho, m}(\mathbf{x}_{m+1}, \mathbf{z}, \boldsymbol{\lambda}_m)$.
\State $\boldsymbol{\lambda}_{m+1} = \boldsymbol{\lambda}_{m} - \rho (\mathbf{B}\mathbf{z}_{m+1} - \mathbf{A}\mathbf{x}_{m+1})$.
\EndFor
\end{algorithmic}
\end{algorithm}

\setcounter{TempEqCnt}{\value{equation}}
\setcounter{equation}{25}
\begin{figure*}[b]
\centering
\hrulefill
\begin{align}
\label{eq.augprox1}
\hat{\cal L}_{\rho, m}(\mathbf{x}, \mathbf{z}, \boldsymbol{\lambda})=\sum_i \Big\{& f_i(P_i(m),\eta_i(m))
+P_i \frac{\partial f_i(m)}{\partial P_i}
+\eta_i \frac{\partial f_i(m)}{\partial \eta_i}
-\bm\lambda_i^T(\mathbf{B}_i\mathbf{\bar{q}}-\mathbf{q}_i)+\frac{\rho}{2}\|\mathbf{B}_i\mathbf{\bar{q}}-\mathbf{q}_i\|^2 \notag\\
&+(\|P_i-P_i(m)\|^2 + \|\mathbf{q}_i- \mathbf{q}_i(m)\|^2 + \|\eta_i-\eta_i(m)\|^2)/2\zeta(m+1)
\Big\}
\end{align}
\setcounter{equation}{\value{TempEqCnt}}
\end{figure*}

\subsection{Problem Reformulation}\label{sec:reform}

We next reformulate \eqref{tasknew} such that the stochastic ADMM procedures in Algorithm 1 can be applied to obtain the optimal CMBF solution in a distributed fashion.

To this end, we introduce the auxiliary variables $q_{jik}:=\sum_{l=1}^{K}
\text{tr}(\mathbf{R}_{jik}\mathbf{W}_{jl})$, $\forall j \neq i$. Clearly, $q_{jik}$ is the inter-BS interference power from $\text{BS}_j$ to $\text{UE}_{ik}$. Further introduce another set of auxiliary variables $Q_{ik} := \sum_{j \neq i} q_{jik}$, $\forall i,k$, which represent the total inter-BS interference power from the neighboring BSs to $\text{UE}_{ik}$.

Using $\{q_{jik}\}$ and $\{Q_{ik}\}$, we can rewrite (\ref{tasknew}) into
\begin{subequations}~\label{sadmm}
\begin{align}
& \min_{\{P_i, \eta_i, \mathbf{W}_{ik}, Q_{ik}, q_{jik}\}} \sum_{i=1}^{I} F_i(P_i, \eta_i) \label{sadmm1} \\
& \mathrm{s.t.} ~~P_i = \sum_{k=1}^{K}\text{tr}(\mathbf{W}_{ik}) , ~~\forall i \label{sadmm2}\\
&~~ q_{jik} = \sum_{l=1}^{K} \text{tr}(\mathbf{R}_{jik}\mathbf{W}_{jl}), ~~\forall j,i,k, \& j \neq i \label{sadmm3} \\
&~~\frac{1}{\gamma_{ik}}\text{tr}(\mathbf{R}_{iik}\mathbf{W}_{ik})-\sum_{l\neq k}\text{tr}(\mathbf{R}_{iik}\mathbf{W}_{il}) \geq Q_{ik}+\sigma_{ik}^2, ~~\forall i,k \label{sadmm4}\\
&~~\mathbf{W}_{ik} \succeq \mathbf{0},~~ \forall i,k  \label{sadmm5} \\
&~~ Q_{ik} = \sum_{j \neq i} q_{jik}, ~~ \forall i,k  \label{sadmm6}
\end{align}
\end{subequations}
%Note that constraints \eqref{sadmm2} and \eqref{sadmm3} can be written into such inequalities since $P_i$ increases monotonically with $q_{jik}$,
%and the objective function increases monotonically with $P_i$.
%For any $\{P_i, q_{jik}\}$ that satisfies $P_i > \sum_{k=1}^{K}\text{tr}(\mathbf{W}_{ik})$ and $q_{jik} > \sum_{l=1}^{K} \text{tr}(\mathbf{R}_{jik}\mathbf{W}_{jl})$, we can always find a smaller solution that is feasible for problem \eqref{sadmm}.
%Therefore, \eqref{sadmm2} and \eqref{sadmm3} turn into equalities when optimal solutions are obtained.
It can be easily observed from the SINR constraints \eqref{sadmm4} that each $\text{UE}_{ik}$ concerns only the total inter-BS interference power $Q_{ik}$ rather than the individual inter-BS interference powers $\{q_{jik}\}$. Also note that we can interchange the subindices $j$ and $i$ in \eqref{sadmm3} without changing the problem. As a result, we can decompose the constraints \eqref{sadmm2}--\eqref{sadmm5} into $I$ independent convex sets: $\forall i$,
\begin{align}\label{cvxset}
& {\cal C}_i = \Big\{\big(P_i, \{Q_{ik}\}_{k}, \{q_{ijk}\}_{j,k}\big) \big| \notag\\
&~~~~P_i = \sum_{k=1}^{K}\text{tr}(\mathbf{W}_{ik}), ~q_{ijk} = \sum_{l=1}^{K} \text{tr}(\mathbf{R}_{ijk}\mathbf{W}_{il}), ~\forall j \neq i, \forall k, \notag \\
&~~~~ \frac{1}{\gamma_{ik}}\text{tr}(\mathbf{R}_{iik}\mathbf{W}_{ik})-\sum_{l\neq k}\text{tr}(\mathbf{R}_{iik}\mathbf{W}_{il}) \geq Q_{ik}+\sigma_{ik}^2, ~\forall k, \notag\\
&~~~~\mathbf{W}_{ik} \succeq \mathbf{0}, ~Q_{ik} \geq 0, ~\forall k \Big\}.
\end{align}
Note that we omit variables $\{\mathbf{W}_{ik}\}$ in ${\cal C}_i, \forall i$, as they can be seen as implicit optimization variables
in later formulation \eqref{eq.task2}.

Further define the following vectors:
\begin{subequations}\label{eq.q}
\begin{align}
%&\mathbf{q}_i = \big[[Q_{i1},\ldots, Q_{iK}], [q_{i11},\ldots, q_{i1K}],\ldots, \notag\\
%&\qquad\;[q_{iI1},\ldots, q_{iIK}]\big]^T \in \mathbb{R}_{+}^{IK}, ~~\forall i \label{eq.q2}
&\mathbf{q}_i = \big[[Q_{i1},\ldots, Q_{iK}], [q_{i11},\ldots, q_{i1K}],\ldots, [q_{iI1},\ldots, q_{iIK}]\big]^T\notag\\
&\qquad \in \mathbb{R}_{+}^{IK}, ~~\forall i \label{eq.q2} \\
&\mathbf{\bar{q}} = \big[[\bar{q}_{121},\ldots, \bar{q}_{12K}],\ldots, [\bar{q}_{I(I-1)1},\ldots, \bar{q}_{I(I-1)K}]\big]^T\notag\\
&\qquad\in \mathbb{R}^{I(I-1)K} \label{eq.q1}
\end{align}
\end{subequations}
where $\mathbf{q}_i$ collects variables $\{Q_{ik}\}_{k=1}^{K}$ and $\{q_{ijk}\}_{j,k}$ (with $j \neq i$) that are only relevant to $\text{BS}_i$, and $\mathbf{\bar{q}}$ collects a ``public'' copy of all the inter-BS interferences $\{\bar{q}_{ijk}\}_{i,j,k}$. Here, $\{\bar{q}_{ijk}\}_{i,j,k}$ are another set of new auxiliary variables, and these inter-BS interference terms should remain the same as the ``private'' ones; i.e., $\bar{q}_{ijk} = q_{ijk}$, $\forall i,j,k$. By the latter constraints, it is not difficult to work out the linear mapping matrix $\mathbf{B}_i \in \{0, 1\}^{IK \times I(I-1)K}$, such that $\mathbf{q}_i = \mathbf{B}_i\mathbf{\bar{q}}$,  $\forall i$.

With the introduction of the seemingly ``unnecessary'' auxiliary variables $\{Q_{ik}\}$, $\{q_{ijk}\}$, $\{\bar{q}_{ijk}\}$ as well as the sets $\{{\cal C}_i\}$, our problem can be now rewritten as
\begin{subequations}~\label{eq.task2}
\begin{align}
& \min_{\{P_i, \eta_i, Q_{ik}, q_{ijk}, \bar{q}_{ijk}\}} \sum_{i=1}^{I} F_i(P_i, \eta_i) \\
& \mathrm{s.t.} ~~ \big(P_i, \{Q_{ik}\}_{k}, \{q_{ijk}\}_{j,k}\big) \in {\cal C}_i, \\
&~~~~~~~\mathbf{q}_i = \mathbf{B}_i\mathbf{\bar{q}}, ~\forall i \label{eq.task2-2}
\end{align}
\end{subequations}
We next show that the reformulated (\ref{eq.task2}) is well suited for development of the desired distributed CMBF scheme.

\subsection{Distributed Solving Process via Stochastic ADMM}\label{sec:stoadm}

Define $f_i(P_i, \eta_i, s_i):=\eta_i +\frac{1}{1-\theta}[\hat{f}_i(P_i, s_i)-\eta_i]^{+}$. Clearly, we have $F_i(P_i, \eta_i)= \mathbb{E}_{s_i}\{f_i(P_i, \eta_i, s_i)\}$. Let $\mathbf{s}:=[s_1, \ldots, s_I]^T$. We can now identify the following correspondences between the variables in (\ref{eq.task2}) and \eqref{eq.augprox}:
\begin{align}%\label{correspond}
&\mathbf{x}\equiv [\mathbf{q}_1^T,\ldots, \mathbf{q}_I^T, P_1,\ldots, P_I, \eta_1,\ldots, \eta_I ]^T,~~\boldsymbol{\xi} \equiv \mathbf{s}, \notag\\
&\mathbf{z}\equiv \mathbf{\bar{q}}, ~~
\mathbf{A} \equiv \left(\begin{matrix}
\mathbf{I}_I & \mathbf{0}\\
\mathbf{0} & \mathbf{0}_{2I}
\end{matrix}\right),~~
\mathbf{B} \equiv [\mathbf{B}_1^T,\ldots,\mathbf{B}_I^T]^T, \notag \\
& \mathcal{X} \equiv {\bigcup}_i {\cal C}_i, ~~~ \mathcal{Z} \equiv \mathbb{R}^{I(I-1)K}; \notag
\end{align}
and the functions $f(\mathbf{x},\boldsymbol{\xi}) \equiv \sum_i f_i(P_i, \eta_i, s_i)$,
%$F(\mathbf{x}) \equiv \sum_i F_i(P_i, \eta_i)$,
and $g(\mathbf{z}) \equiv 0$.
Note that as stated in as-2), a large number of realizations of $\{s_i\}_{i \in {\cal I}}$ are available at each BS.
The proposed stochastic ADMM scheme visits one sample per iteration.

We are ready to apply the principle of stochastic ADMM to solve (\ref{eq.task2}). With $\bm{\lambda}:=[\bm{\lambda}_1^T,\ldots,\bm{\lambda}_I^T]^T$ denoting the Lagrange multiplier vector associated with constraints \eqref{eq.task2-2},
the {\em approximated} augmented Lagrangian of \eqref{eq.task2} is given by \eqref{eq.augprox1} [cf. (\ref{eq.augprox})], where $\frac{\partial f_i(m)}{\partial P_i}$ and $\frac{\partial f_i(m)}{\partial \eta_i}$ denote the partial derivatives of $f_i(P_i, \eta_i, s_i)$ with respect to $P_i$ and $\eta_i$, evaluated at $(P_i(m), \eta_i(m), s_i(m+1))$, respectively. Their specific expressions are given by [cf. (\ref{eq.cost})]
\setcounter{equation}{23}
\begin{equation}\label{eq.fpi}
   \frac{\partial f_i(m)}{\partial P_i} =
    \begin{cases}
        \frac {a_i}{1-\theta},& \text{if } \hat{f}_i(m) \geq \eta_i(m) ~\&~ P_i(m)\geq e_i(m+1)\\
        \frac {b_i}{1-\theta},& \text{if } \hat{f}_i(m)\geq \eta_i(m) ~\&~ P_i(m)< e_i(m+1)\\
        0,& \text{if } \hat{f}_i(m)< \eta_i(m)
    \end{cases}
\end{equation}
\begin{equation}\label{eq.feta}
   \frac{\partial f_i(m)}{\partial \eta_i} =
    \begin{cases}
        \frac {-\theta}{1-\theta},& \text{if } \hat{f}_i(m)\geq \eta_i(m) \\
        1,& \text{if } \hat{f}_i(m)< \eta_i(m) \\
    \end{cases}
\end{equation}
where $\hat{f}_i(m)=\hat{f}_i(P_i(m), s_i(m+1))$.

\begin{algorithm}[t]
\caption{Distributed Resource Allocation Algorithm}
\label{algo:ORC}
\begin{algorithmic}[1]
\State {\bf Initialize} $\{P_i(0), \eta_i(0), \mathbf{q}_i(0), \bm{\lambda}_i(0)\}_{i=1}^{I}$, and $\mathbf{\bar{q}}(0)$ that are known to all BSs; choose a penalty parameter $\rho > 0$.
\For {$m$ = 0, 1, 2, ...}
\State Each BS solves the local beamforming design problems \eqref{eq.priup1} to obtain the local inter-BS interference vector $\mathbf{q}_i(m+1)$,
the energy consumption $P_i(m+1)$, and as a byproduct, the beamforming matrices $\{\mathbf{W}_{ik}(m+1)\}_{k=1}^K$. Each BS also solves the local problem \eqref{eq.priup3} to determine the threshold $\eta_i(m+1)$.
%\State Each BS solves problem \eqref{eq.ftheta} to obtain the optimal beamforming matrices $\{\mathbf{W}_{ik}(m+1)\}_{k=1}^K$ according to
%$P_i(m+1)$ and $\eta_i(m+1)$.
\State Each BS informs other BSs of its local inter-BS interference iterate $\mathbf{q}_i(m+1)$.
\State Each BS updates the public inter-BS interference vector $\mathbf{\bar{q}}(m+1)$ via \eqref{eq.priup4}.
\State Each BS updates the dual variable $\bm{\lambda}_i(m+1)$ via \eqref{eq.dual}.
\EndFor ~{until the predefined convergent criterion is met}.
\end{algorithmic}
\end{algorithm}
%Therefore, the distributed resource allocation and CM beamforming design task is reformulated as
%%\setcounter{equation}{30}
%\begin{align}\label{eq.task4}
%&\qquad \quad\min ~\hat{\cal L}_{\rho, m}(\mathbf{x}, \mathbf{z}, \boldsymbol{\lambda})\\
%&\mathrm{subject~to:}~ \eqref{eq.task20}-\eqref{eq.task23}, \eqref{eq.task25} \notag
%%&0 \leq P_i \leq P^{\max}, ~\forall i ~\label{eq.task42}\\
%%&Q_{ik}\geq 0, ~\forall i,k. ~\label{eq.task43}
%\end{align}

Based on the {\em approximated} augmented Lagrangian in \eqref{eq.augprox1}, problem \eqref{eq.task2} can be then tackled by solving the decomposed subproblems given by \eqref{eq.priup} and \eqref{eq.dual}, which update the primal and dual variables at $\text{BS}_i$, respectively, $\forall i$.
\setcounter{equation}{26}
\begin{subequations}\label{eq.priup}
\begin{align}
&\{P_i(m+1), \mathbf{q}_i(m+1)\} = \argmin\limits_{\big(P_i, \mathbf{q}_i\big) \in {\cal C}_i} P_i\frac{\partial f_i(m)}{\partial P_i}  \notag\\
&\qquad \qquad \; +\bm\lambda_i^T(m)\mathbf{q}_i
+\frac{\rho}{2}\|\mathbf{B}_i\mathbf{\bar{q}}(m)-\mathbf{q}_i\|^2 \notag \\
& \qquad \qquad \;  + \frac{\|P_i-P_i(m)\|^2}{2\zeta(m+1)} +\frac{\|\mathbf{q}_i- \mathbf{q}_i(m)\|^2}{2\zeta(m+1)}\label{eq.priup1}\\
%&P_i(m+1) = \argmin\limits_{\substack{\{P_i\}\\\text{s. t.}~\eqref{eq.task32}}} \; f_{\theta}(P_i(m),\eta(m))+ P_i\frac{\partial f_{\theta}(P_i,\eta)}{\partial P_i}\notag\\
%& \qquad \qquad \qquad \qquad \;\;\; + \frac{\|P_i-P_i(m)\|^2}{2\zeta(m+1)}\label{eq.priup2} \\
&\eta_i(m+1) = \argmin\limits_{\eta_i} \; \eta_i\frac{\partial f_i(m)}{\partial \eta_i}+\frac{\|\eta_i-\eta_i(m)\|^2}{2\zeta(m+1)}\label{eq.priup3}\\
&\mathbf{\bar{q}}(m+1) = \argmin\limits_{\mathbf{\bar{q}}} \sum_{i} \Big(-\bm\lambda_i^T(m)\mathbf{B}_i\mathbf{\bar{q}} \notag\\
& \qquad \qquad \qquad \qquad\;\;\; + \frac{\rho}{2}\|\mathbf{B}_i\mathbf{\bar{q}}-\mathbf{q}_i(m+1)\|^2\Big)\label{eq.priup4}
\end{align}
\end{subequations}
\begin{equation}\label{eq.dual}
\bm\lambda_i(m+1) = \bm\lambda_i(m) - \rho\big[\mathbf{B}_i\mathbf{\bar{q}}(m+1)-\mathbf{q}_i(m+1)\big]
\end{equation}

It is critical to note that the stochastic ADMM steps \eqref{eq.priup} and \eqref{eq.dual} can be implemented in a distributed fashion.
Specifically, each $\text{BS}_i$ maintains two vectors $\mathbf{q}_i$ and $\mathbf{\bar{q}}$. Given the knowledge of local CSI $\{\mathbf{R}_{ijk}\}_{j,k}$, $\text{BS}_i$ can solve the optimization problems \eqref{eq.priup1}-\eqref{eq.priup3}
independently, $\forall i$. After that, each BS broadcasts its latest $\mathbf{q}_i$ to other BSs. With the updated $\{\mathbf{q}_i\}$,
each BS can compute the public inter-BS interference vector $\mathbf{\bar{q}}$ according to \eqref{eq.priup4},
and then use it to update the dual variable $\bm{\lambda}_i$ by \eqref{eq.dual}.
The procedures of the proposed algorithm are summarized in Algorithm 2.

Recall that problem (\ref{tasknew}) as well as the reformulated (\ref{eq.task2}) are convex. As Algorithm 2 follows the stochastic ADMM steps, we readily have:
\begin{proposition}
The variables $\{\eta_i(m),P_i(m), \mathbf{q}_i(m)\}_{i=1}^{I}$, $\mathbf{\bar{q}}(m)$, and $\{\bm{\lambda}_i(m)\}_{i=1}^{I}$ in Algorithm 2
converge to the optimal primal and dual solutions of (\ref{eq.task2}) in expectation as $m \rightarrow \infty$.
When the algorithm converges, the beamforming matrices $\{\mathbf{W}_{i1}^*,\ldots,\mathbf{W}_{iK}^*\}_{i=1}^{I}$ obtained in Step 3
is a global optimal solution for the SDP problem \eqref{tasknew}.
In addition, we can always have: $\text{rank}(\mathbf{W}_{ik}^*)=1$, $\forall i,k$, when $\text{rank}(\mathbf{R}_{ijk})=1$, $\forall i, j, k$;
in this case, the optimal CMBF solution $\{\mathbf{w}_{ik}^*\}$ can be readily retrieved for the original problem (\ref{eq.task}).
\end{proposition}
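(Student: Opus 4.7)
The plan is to derive the three claims of Proposition 1 by (i) casting the reformulated problem \eqref{eq.task2} as an instance of the template \eqref{eq:genADMM}, (ii) verifying the technical hypotheses of the stochastic ADMM convergence theorem in \cite{hua13}, and (iii) combining the resulting convergence statement with the equivalence between \eqref{eq.task2} and \eqref{tasknew} and with Lemma 2.

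First, I would spell out the variable/function identification already listed in Section~IV-C: the decision vector $\mathbf{x}$ stacks $\{\mathbf{q}_i,P_i,\eta_i\}_i$ and lies in $\mathcal{X}=\bigotimes_i\mathcal{C}_i$, the consensus vector $\mathbf{z}\equiv\bar{\mathbf{q}}$ lies in $\mathcal{Z}=\mathbb{R}^{I(I-1)K}$, the coupling is the linear $\mathbf{q}_i=\mathbf{B}_i\bar{\mathbf{q}}$, and $f(\mathbf{x},\boldsymbol{\xi})\equiv\sum_i f_i(P_i,\eta_i,s_i)$ with $g\equiv 0$. I then need to verify: convexity of $\mathcal{X},\mathcal{Z}$ (immediate, since each $\mathcal{C}_i$ is an SDP-representable set); convexity of $f$ in $\mathbf{x}$ for every realization of $\boldsymbol{\xi}$ (follows from Lemma~1 applied sample-wise); and that the subgradients in \eqref{eq.fpi}-\eqref{eq.feta} admit a uniform second-moment bound (they are piecewise constant and bounded in absolute value by $\max_i a_i/(1-\theta)$ and $1/(1-\theta)$, so the bound is deterministic). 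With the step size chosen as $\zeta_{m+1}=O(1/\sqrt{m})$, all the hypotheses of \cite[Theorem~2]{hua13} are satisfied.

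Applying that theorem then yields that the running averages of $\{\eta_i(m),P_i(m),\mathbf{q}_i(m)\}_i$, $\bar{\mathbf{q}}(m)$ and $\{\bm{\lambda}_i(m)\}_i$ converge in expectation, at the standard $O(1/\sqrt{m})$ rate, to an optimal primal-dual solution of \eqref{eq.task2}. The main technical obstacle here, which I would flag explicitly, is that the approximated Lagrangian \eqref{eq.augprox1} linearizes the stochastic term only in the $(P_i,\eta_i)$-block while keeping the $\mathbf{q}_i$-block exact, and augments the latter with an additional proximal term $\|\mathbf{q}_i-\mathbf{q}_i(m)\|^2/(2\zeta_{m+1})$. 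I would argue this is still covered by the analysis of \cite{hua13}, because (a) $f$ is independent of $\mathbf{q}_i$, so the gradient of $f$ with respect to the $\mathbf{q}_i$-block is identically zero and its ``linearization'' is exact, and (b) an extra strongly-convex prox-term in the $\mathbf{q}_i$-block only tightens the per-iterate subproblem and does not affect the optimality residual bounds once $\zeta_{m+1}\downarrow 0$.

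Once in-expectation convergence to an optimum of \eqref{eq.task2} is established, the remaining two claims follow directly. By the construction of $\mathcal{C}_i$ in \eqref{cvxset}, the triple $(P_i^*,\{Q_{ik}^*\},\{q_{ijk}^*\})$ produced at convergence is realized by the beamforming matrices $\{\mathbf{W}_{ik}^*\}_{k=1}^{K}$ returned as a byproduct of subproblem \eqref{eq.priup1}. Since \eqref{eq.task2} and \eqref{tasknew} share the same optimal value—\eqref{eq.task2} being merely a consensus splitting of \eqref{tasknew} through the auxiliary variables $\{q_{ijk},\bar{q}_{ijk},Q_{ik}\}$—the stacked matrices $\{\mathbf{W}_{ik}^*\}_{i,k}$ are globally optimal for \eqref{tasknew}. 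Finally, under the rank-one channel-covariance assumption $\text{rank}(\mathbf{R}_{ijk})=1$, case c-1) of Lemma~2 guarantees that we may, without loss of generality, take $\text{rank}(\mathbf{W}_{ik}^*)=1$ for all $i,k$, so an eigen-decomposition $\mathbf{W}_{ik}^*=\mathbf{w}_{ik}^*(\mathbf{w}_{ik}^*)^H$ immediately yields the optimal CMBF solution of the original problem \eqref{eq.task}. This completes the proof sketch.
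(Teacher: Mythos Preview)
Your proposal is correct and follows essentially the same route as the paper's proof: both verify the uniform second-moment bound on the stochastic subgradients (the paper writes this as $\mathbb{E}[\|f'(\mathbf{x},\boldsymbol{\xi})\|^2]\leq T$ with $T=I(a_{\max}^2+\theta^2)/(1-\theta)^2$), invoke the convergence theorem of \cite{hua13}, and then defer the rank-one claim to Lemma~2. Your write-up is in fact more thorough than the paper's---you explicitly check convexity of the feasible sets and address the subtlety that the linearization/prox-term is applied block-wise---whereas the paper simply asserts the gradient bound and cites \cite[Theorem~1]{hua13} (you cite Theorem~2; either way the mechanism is the same).
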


\begin{proof}
Define $a_{\max}$ as the maximum possible electricity prices; i.e., we always have $b_i \leq a_i \leq a_{\max}$, $\forall i$.
It can be easily verified that $\mathbb{E}[\|f'(\mathbf{x},\boldsymbol{\xi})\|^2]\leq T$  [cf. \eqref{eq.fpi}-\eqref{eq.feta}],
$\forall x \in {\mathcal X}$, where $T=\frac{I(a_{\max}^2 + \theta^2)}{(1-\theta)^2}$ is a finite constant.
Given the boundedness of $\mathbb{E}[\|f'(\mathbf{x},\boldsymbol{\xi})\|^2]$, it then follows from \cite[Theorem 1]{hua13} that the proposed stochastic ADMM algorithm can converge to the optimal solution of the centralized problem \eqref{tasknew} in expectation,
as $m \rightarrow \infty$.
In other words, only ``stochastic'' convergence can be achieved; i.e., the  resultant primal and dual variables only hover within a small neighborhood around the optimal values, as will be verified by the simulation results in the sequel.

When $\mathbf{R}_{ijk}$ is rank-one, i.e., $\mathbf{R}_{ijk}=\mathbf{h}_{ijk}\mathbf{h}_{ijk}^H$, $\forall i,j,k$, we can follow the similar lines in the proof of Lemma 2 to show that problem (\ref{sadmm}) can be formulated into an SOCP; thus we can always have rank-one optimal $\{\mathbf{W}_{ik}^*\}_{i,k}$.
\end{proof}

\begin{remark}\textit{(Distributed inter-BS interference regularization)}
Algorithm 2 can be interpreted as an adaptive inter-BS interference regularization strategy where the coordinated BSs
gradually obtain their own beamforming solutions in an offline fashion.
The optimal $\{\mathbf{W}_{ik}^*\}$ are obtained until a consensus on the inter-BS interference powers among BSs is reached,
i.e., $\mathbf{B}_i\mathbf{\bar{q}}(m)=\mathbf{q}_i(m)$, $\forall i$.
For the time-invariant channel case where $\mathbf{R}_{ijk}=\mathbf{h}_{ijk}\mathbf{h}_{ijk}^H$, $\forall i,j,k$,
it is established that we always have the rank-one optimal matrices $\{\mathbf{W}_{ik}^*\}$.
For the general case where $\mathbf{R}_{ijk}$ can be of any rank, the tightness of the SDR could not be proved;
however, again our extensive numerical results indicate that this actually always holds even in this distributed CMBF scenario.
%With rank-one $\{\mathbf{W}_{ik}^*\}$,  $\{\mathbf{w}_{ik}^*\}$ for the original (\ref{eq.task})
Note that the optimal CMBF solutions $\{\mathbf{w}_{ik}^*\}$ need only to be calculated
from $\{\mathbf{W}_{ik}^*\}$ upon convergence.
Proposition 1 then ensures that Algorithm 2 can yield the globally optimal CMBF scheme in a distributed manner.
In practice, the algorithm is re-run each time the statistical characteristics of the random variables
(energy harvesting amounts and electricity prices) change, or the propagation environment for the wireless communications
(channel covariance matrices) change.
The interval can range from tens of seconds to hours.

%In fact, it is also established in \cite{hua13} that this algorithm can converge at a rate of ${O}(1/\sqrt{t})$ for convex functions and ${O}(\log t/t)$ for strongly convex functions regarding both objective values and feasibility violations.
\end{remark}

\begin{remark}\textit{(Complexity and information exchange)}
Per iteration of Algorithm 2, each BS solves \eqref{eq.priup1} and \eqref{eq.priup3}. The sub-problem \eqref{eq.priup3} is a (convex) quadratic program which can be solved in closed-form. The sub-problem \eqref{eq.priup1} is essentially an SDP, which can be solved by general interior-point methods with a worst-case computational complexity $O\{[(I+N_t^2)K]^{3.5}\}$. On the other hand, each BS only needs to exchange with other BSs the vector $\mathbf{q}_i$, which contains its $IK$ local inter-BS interference levels. Clearly, the information exchange amount is limited and the computational complexity is affordable at each BS. In addition, it is established in \cite{hua13} that the algorithm can converge at a rate of ${O}(1/\sqrt{m})$ %for convex functions and ${O}(\log m/m)$ for strongly convex functions
with regard to both objective values and feasibility violations; i.e., the proposed distributed scheme can quickly find the optimal CMBF solution.
\end{remark}

\section{Numerical Results}\label{sec:sim}
%\begin{table}[t]
%\renewcommand{\arraystretch}{1.2}
%\centering
%\caption{Energy purchase prices}\label{tabwind}
%    \begin{tabular}{  | c | c | c | c | c |}
%    \hline
%    $i$             & 1    &  2   &  3    &4   \\ \hline
%
%     $a_i(\$/\text{KWh})$          & 0.402   & 0.44    &1.32    &1.166 \\ \hline
%    \end{tabular}
%\end{table}

In this section, simulated tests are presented to evaluate the performance of the proposed algorithm.
We consider a coordinated multicell downlink with $I=4$ BSs, each having $N_t=8$ antennas, and serving $K=4$ single-antenna mobile users.
The covariance matrices $\{\mathbf{R}_{jik}\}, \forall j, i, k$ for wireless channels are chosen according to
the exponential correlation model in \cite{jlee10, zeng12}.
The $(m,n)$-th element of the matrix $\mathbf{R}_{jik}$ is given by $\alpha^{|m-n|}e^{j\beta(m-n)}$,
where $\alpha \in [0,1)$ (set as $0.9$ by default in simulations) is the correlation coefficient, $\beta \in [0,2\pi)$ is the phase difference between antennas
(varying across users), and with a little abuse of notation $j=\sqrt{-1}$.
Considering the large-scale fading, channel gain from the same cell (BS) is normalized to $1$, while those from other cells are multiplied by $0.25$,
as a fading coefficient.
The SINR threshold is set as $\gamma_{ik} = 8$ for all users.
%The system bandwidth is $10$ MHz.
The energy purchase price $a_i$ obeys a uniform distribution with a mean of $1 (\$/\text{KWh})$,
while the selling price is set as $b_i = 0.9a_i$, $i \in \cal I$.
%The energy harvesting interval is set as $1$ hour.
Stochasticity of RES amount is mainly due to wind speed, thus we assume that $e_i$ follows the Weibull distribution,
which performs well to approximate the characteristics of wind speed variation \cite{yeh08, wen15}.
Constant penalty parameter $\rho = 1$ and stepsize $\zeta = 0.1$ are adopted in the proposed algorithm.
The CVaR confidence level is set as $\theta = 0.9$ unless otherwise stated.

Convergence of the proposed stochastic ADMM scheme is verified by Fig. \ref{obj}.
It is shown that the proposed distributed algorithm converges to the optimal solution obtained by the centralized algorithm within $200$ iterations.
Due to the convergence in expectation result established in Proposition 1, we can observe that upon convergence, the cost with the proposed stochastic scheme hovers within a small neighborhood around the optimal value provided by the centralized scheme.

\begin{figure}[t]
\centering
\vspace{-0.1cm}
\includegraphics[width=0.50\textwidth]{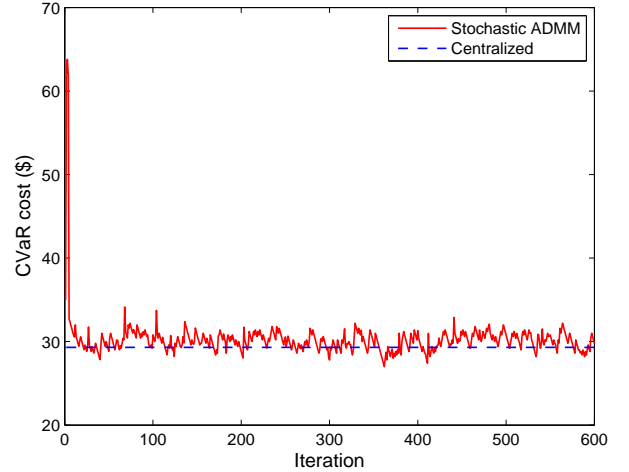}
\vspace{-0.65cm}
\caption{Convergence of the proposed stochastic ADMM scheme.}
\label{obj}
\vspace{-0.3cm}
\end{figure}

To gauge the performance of the proposed scheme, we use i) the optimal CMBF scheme for weighted sum-power minimization without RES (labeled as ``No RES'') \cite{shen12}, and ii) the optimal CMBF scheme for expected total-cost minimization when $\theta=0$ (labeled as ``Min-Cost''), as the baseline schemes. We include the performance of the proposed scheme (labeled as ``Min-CVaR''), when $\theta$ is set as $0.9$ or $0.5$,
and when the (random) RES amount $e_i$ obeys the Weibull or exponential distribution per $\text{BS}_i$.
Fig. \ref{avgcost} compares the average total energy transaction costs for different schemes.  It is clearly observed that the energy cost reduces for the wireless multicell downlink with RES integration, and the reduction becomes more significant as the available RES amount grows. When the average RES amount is $7.5$ KW per BS, the proposed scheme can save $\$27.5$ (or $57\%$) on average over the CMBF scheme without RES. We can also see that the distribution of RES does not affect significantly the optimal solution of the proposed scheme, which is very close to the one for expected total-cost minimization. This demonstrates the efficiency of the proposed scheme in reducing the energy transaction cost.

\begin{figure}[t]
\centering
\vspace{-0.1cm}
\includegraphics[width=0.50\textwidth]{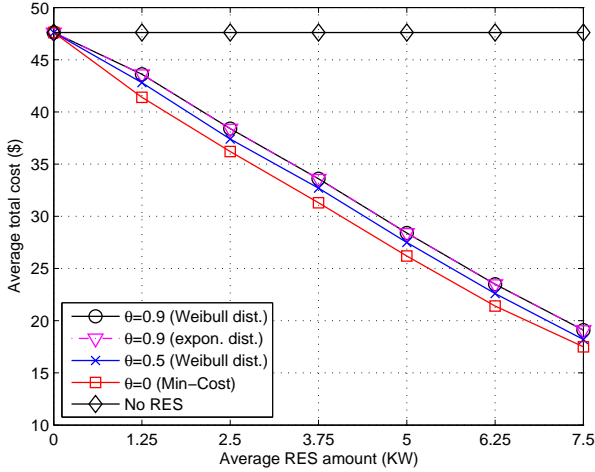}
\vspace{-0.65cm}
\caption{Average costs subject to different RES amounts.}
\label{avgcost}
\vspace{-0.3cm}
\end{figure}

Fig. \ref{cdf} depicts the empirical cumulative distribution functions (CDFs) of the energy transaction costs for the proposed scheme
under different values of the confidence level $\theta$, when average RES amount is $3.75$ KW per BS.
It is shown that use of CVaR, especially with a larger value of $\theta$, can effectively control the risk of very large energy transaction costs.
In particular, while the worst-case cost for the $\theta=0$ case (i.e., Min-Cost case) is $\$94.7$,
the worst-case costs for the $\theta=0.3$, $0.6$, and $0.9$ cases are
$\$73.2$, $\$68.7$, and $\$63.8$
(a $23\%$, $27\%$, and $33\%$ reduction), respectively.
Clearly, the proposed scheme is both robust and efficient, taking full advantage of stochastic RES.

\begin{figure}[t]
\centering
\vspace{-0.1cm}
\includegraphics[width=0.50\textwidth]{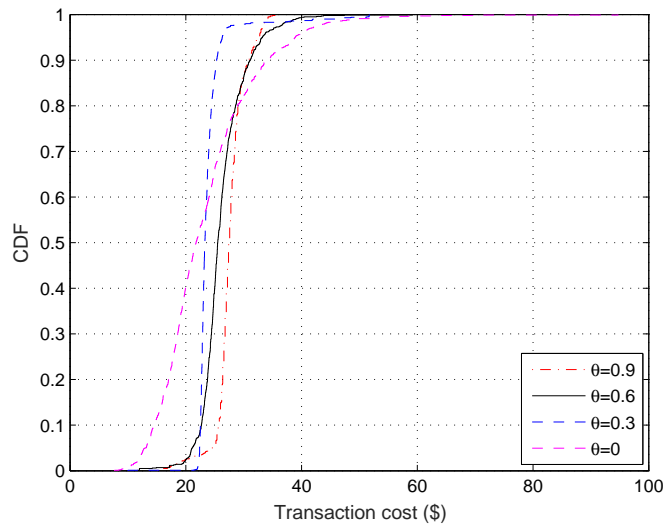}
\vspace{-0.65cm}
\caption{CDFs against different confidence levels.}
\label{cdf}
\vspace{-0.3cm}
\end{figure}

We further include the performance of the proposed ``Min-CVaR'' scheme against the ``Min-Cost'' approach under different numbers of transmit antennas in Fig. \ref{threent}. The average and worst-case costs of the two comparing schemes in different scenarios are listed in Table \ref{tabavg}.
It can be clearly observed that it costs less for the wireless system with an increasing number of transmit antennas.
When $N_t=8$, the proposed ``Min-CVaR'' scheme reduces the worst-case cost by $33\%$ with $7\%$ addition in average total cost
when compared with the ``Min-Cost'' scheme.
The ``Min-CVaR'' scheme is most efficient and can yield a much better performance than the ``Min-Cost'' one when $N_t=12$,
for it reduces the worst-case cost by $38\%$ with only $3\%$ addition in average total cost.
%when compared with the ``Min-Cost'' scheme.
On the other hand, performances of the two schemes are close when $N_t=16$,
since there are sufficient transmit antennas serving a total of $16$ user equipments.
%about $33\%$ of the worst-case cost is reduced for ``Min-CVaR'' over `Min-Cost'' scheme when $N_t=8$ and $N_t=16$.
Merits of the proposed risk-constrained CMBF approach are clearly seen.

\begin{figure}[t]
\centering
\vspace{-0.1cm}
\includegraphics[width=0.50\textwidth]{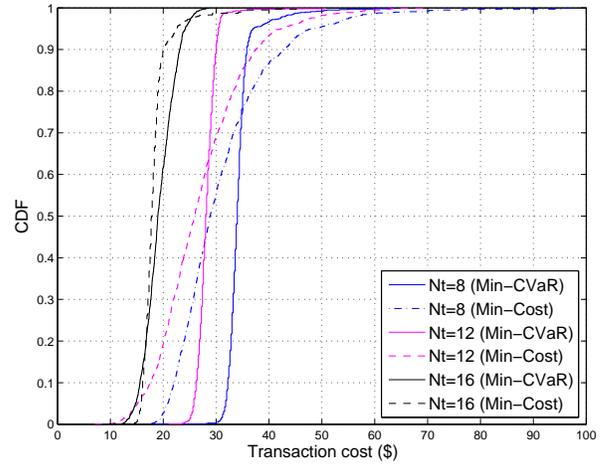}
\vspace{-0.65cm}
\caption{CDFs under different number of transmit antennas.}
\label{threent}
\vspace{-0.3cm}
\end{figure}

\begin{table}[t]
\renewcommand{\arraystretch}{1.2}
\centering
\caption{Average and worst-case costs $(\$)$ of different schemes under various values of $N_t$}\label{tabavg}
    \begin{tabular}{  | c | c | c | c | }
    \hline
    No. of transmit antennas ($N_t$)            & 8    &  12   &  16    \\ \hline

     Average cost (Min-CVaR)          & 33.5  & 23.5    &18.3     \\ \hline
     Average cost (Min-Cost)          & 31.2   & 22.9    &17.7    \\ \hline
     Worst-case cost (Min-CVaR)          & 63.8   & 43.6    &29.5    \\ \hline
     Worst-case cost (Min-Cost)          & 94.7   & 70.8    &44.2    \\ \hline
    \end{tabular}
\end{table}

Finally, we depict the influence of the SINR threshold on the average and worst-case transaction costs (when $N_t=8$) in Fig. \ref{sinr}.
It can be observed that the total cost increases along with the target SINR value.
The ``Min-CVaR'' scheme can control the worst-case cost efficiently when compared with the ``Min-Cost'' scheme,
especially in the high SINR scenario.
For instance, with the target SINR value $\gamma_{ik}=10$, the ``Min-CVaR'' scheme can reduce as much as $\$36$ of the worst-case cost,
with only $\$1.6$ addition in average cost.
Merits of the proposed approach are once again validated.

\begin{figure}[t]
\centering
\vspace{-0.1cm}
\includegraphics[width=0.50\textwidth]{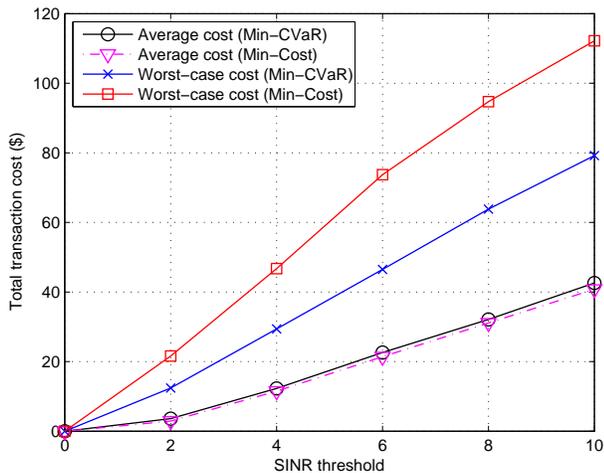}
\vspace{-0.65cm}
\caption{Average and worst-case costs under different SINR thresholds.}
\label{sinr}
\vspace{-0.3cm}
\end{figure}

Although the proof-of-concept simulation tests here are run for a small 4-BS system, the proposed CVaR-based CMBF design already displays great efficiency in controlling the risk of very high cost by saving as much as $\$30$ in the worst-case scenario.
Considering approximately a total of $6$ million BSs deployed across Chinese mainland \cite{web},
the proposed scheme could lead to a very large saving
($\$30/4 * 6 \text{ million} = \$45$ million) in electricity bill.
It can also help maintain a stable level of power flow from the electricity grid nation-wide,
and facilitates the development of ``green'' communication networks.

%To gauge the performance of the proposed scheme, we use i) the optimal CMBF scheme for weighted sum-power minimization without RES (labeled as ``No RES'') \cite{shen12}, and ii) the optimal CMBF scheme for expected total-cost minimization (labeled as ``Min-Cost''), as the baseline schemes. We include the performance of the proposed scheme (labeled as ``Min-CVaR''), when the (random) RES amount $e_i$ is uniformly or exponentially distributed per $\text{BS}_i$.
%Fig. \ref{cost}(a) compares the average total energy transaction costs for different schemes.  It is clearly observed that the energy cost reduces for the wireless multicell downlink with RES integration, and the reduction becomes more significantly as the available RES amount grows. When the average RES amount is $3$ KW per BS, the proposed scheme can save $\$4.4$ (or $59\%$) on average over the CMBF scheme without RES. We can also see that the distribution of RES does not affect the optimal solution of the proposed scheme, which is very close to the one for expected total-cost minimization. This demonstrates the efficiency of the proposed scheme in reducing the energy transaction cost.

\section{Conclusions}\label{sec:con}

Distributed CMBF design was addressed for a RES-powered coordinated multicell downlink system.
The task was formulated into a convex program that minimizes the system-wide CVaR-based energy transaction cost with user QoS guarantees.
Leveraging the stochastic ADMM, optimal CMBF solution was obtained in a fast and distributed fashion.
Extensive tests corroborated the efficient, robust, risk-constrained and stability-maintaining merits of the proposed scheme.
The proposed framework paves a way to further advancing fundamental research on distributed resource allocation for next-generation integrated communication systems with RES.

\balance

\end{document}